\renewcommand{\email}[1]{\emailname: #1} 
\renewenvironment{proof}{\noindent{\itshape Proof.}}{\smartqed\qed}
\DeclareSymbolFont{bbold}{U}{bbold}{m}{n}
\DeclareSymbolFontAlphabet{\mathbbold}{bbold}
\newcommand{\DEF}{{\,:=\,}}
\newcommand{\PT}[1]{\mathbf{#1}}
\DeclareMathOperator{\bal}{Bal}
\DeclareMathOperator{\betafcn}{B}
\DeclareMathOperator{\dd}{\D}
\DeclareMathOperator{\digammafcn}{\psi}
\DeclareMathOperator{\gammafcn}{\Gamma}
\DeclareMathOperator{\IncompleteBetaRegularized}{I}
\DeclareMathOperator{\supp}{supp}
\DeclareMathOperator{\HyperTildeF}{\mathop{\mathbf{F}\/}\nolimits\!}
\newcommand{\HypergeomReg}[5]{{\sideset{_#1}{_#2}
\HyperTildeF\!\left(\substack{\displaystyle#3\\\displaystyle#4};#5\right)}}
\begin{document}

\title*{Logarithmic and Riesz Equilibrium for Multiple Sources on the Sphere --- the Exceptional Case}
\titlerunning{Logarithmic and Riesz Equilibrium for Multiple Sources on the Sphere}

\author{Johann S.\ Brauchart \and Peter D.\ Dragnev \and Edward B.\ Saff \and Robert S.\ Womersley}

\authorrunning{J.~S.~Brauchart, P.~D.~Dragnev, E.~B.~Saff, and R.~S.~Womersley}

\institute{
 Johann S. Brauchart 
 \at Institute of Analysis and Number Theory, Graz University of Technology, Kopernikusgasse 24/II, 8010 Graz, Austria \\
 \email{j.brauchart@tugraz.at}
 \and
 Peter D.\ Dragnev
 \at Department of Mathematical Sciences, Indiana University - Purdue University, Fort Wayne, \\ IN 46805, USA \\
 \email{dragnevp@ipfw.edu}
 \and
 Edward B.\ Saff (\Letter)
 \at Center for Constructive Approximation, Department of Mathematics, Vanderbilt University, Nashville, TN 37240, USA \\
 \email{edward.b.saff@vanderbilt.edu}
 \and
 Robert S.\ Womersley
 \at School of Mathematics and Statistics, University of New South Wales, Sydney, NSW, 2052, \\ Australia \\
 \email{r.womersley@unsw.edu.au}
}

\maketitle

\index{Brauchart, Johann S.}
\index{Dragnev, Peter D.}
\index{Saff, Edward B.}
\index{Womersley, Robert S.}

{\center{\it To Ian Sloan,  an outstanding mathematician, mentor, and colleague, \\ with much appreciation for his insights, guidance, and friendship.}}

\vskip 5mm
\abstract{
We consider the minimal discrete and continuous energy problems on the unit sphere $\mathbb{S}^d$ in the Euclidean space $\mathbb{R}^{d+1}$ in the presence of an external field due to finitely many localized charge distributions on $\mathbb{S}^d$, where the energy arises from the Riesz potential $1/r^s$ ($r$ is the Euclidean distance) for the critical Riesz parameter $s = d - 2$ if $d \geq 3$ and the logarithmic potential $\log(1/r)$ if $d = 2$. Individually, a localized charge distribution is either a point charge or assumed to be rotationally symmetric. The extremal measure solving the continuous external field problem for weak fields is shown to be the uniform measure on the sphere but restricted to the exterior of spherical caps surrounding the localized charge distributions. The radii are determined by the relative strengths of the generating charges. Furthermore, we show that the minimal energy points solving the related discrete external field problem are confined to this support. For $d-2\leq s<d$, we show that for point sources on the sphere, the equilibrium measure has support in the complement of the union of specified spherical caps about the sources. Numerical examples are provided to illustrate our results.
}

\section{Introduction} 
\label{Intr}

Let $\mathbb{S}^d \DEF \{ \PT{x} \in \mathbb{R}^{d+1} : |\PT{x}|=1 \}$ be the unit sphere in $\mathbb{R}^{d+1}$, where $|\PT{\cdot}|$ denotes the Euclidean norm. Given a compact set $E\subset \mathbb{S}^d$, consider the class $\mathcal{M}(E)$ of unit positive Borel measures supported on $E$. For $0<s<d$ the {\it Riesz $s$-potential} and {\it Riesz $s$-energy} of a measure $\mu \in \mathcal{M}(E)$ are given, respectively, by
\begin{equation*}
U_s^\mu(\PT{x}) \DEF \int k_s( \PT{x}, \PT{y} ) \dd \mu(\PT{y}), \ \ \PT{x} \in \mathbb{R}^{d+1}, \quad \mathcal{I}_s(\mu) \DEF \int \int k_s( \PT{x}, \PT{y} ) \dd \mu(\PT{x}) \dd \mu(\PT{y}),
\end{equation*}
where $k_s (\PT{x}, \PT{y}) \DEF |\PT{x}-\PT{y}|^{-s}$ for $s>0$ is the so-called {\it Riesz kernel}. For the case $s=0$ we use the logarithmic kernel $k_0(\PT{x}, \PT{y}) \DEF \log(1/|\PT{x}-\PT{y}|)$. The {\it $s$-capacity} of $E$ is then defined as $C_s(E) \DEF 1/ W_s(E)$ for $s>0$ and $C_0 (E)=\exp(-W_0 (E))$, where $W_s(E) \DEF \inf \{ \mathcal{I}_s(\mu) : \mu \in \mathcal{M}(E) \}$. A property is said to hold {\em quasi-everywhere (q.e.)} if the exceptional set has $s$-capacity zero. When $C_s(E)>0$, there exists a unique minimizer $\mu_E = \mu_{s,E}$, called the {\it $s$-equilibrium measure on $E$}, such that $\mathcal{I}_s(\mu_E) = W_s(E)$. The $s$-equilibrium measure 
is just the normalized surface area measure on $\mathbb{S}^d$ which we denote with $\sigma_d$. For more details see \cite[Chapter II]{Landkof1972}.

We remind the reader that the $s$-energy of $\mathbb{S}^d$ is given by
\begin{equation} \label{SphereEnergy} 
U_s^{\sigma_d}(\PT{x}) = \mathcal{I}_s(\sigma_d) = W_s(\mathbb{S}^d) = \frac{\gammafcn(d)\gammafcn((d-s)/2)}{2^s\gammafcn(d/2)\gammafcn(d-s/2)}, \qquad 0 < s < d,
\end{equation}
and the logarithmic energy of $\mathbb{S}^d$ is given by
\begin{equation*} 
\begin{split}
U_0^{\sigma_d}(\PT{x}) = \mathcal{I}_0(\sigma_d) = W_0(\mathbb{S}^d) 
&= \frac{\D W_s(\mathbb{S}^d)}{\D s}\Big|_{s=0} = - \log( 2 ) + \frac{1}{2} \left( \digammafcn( d ) - \digammafcn( d/2 ) \right),
\end{split}
\end{equation*}
where $\digammafcn( s ) \DEF \gammafcn^\prime( s ) / \gammafcn( s )$ is the digamma function.
%
Using cylindrical coordinates
\begin{equation}
\PT{x} = ( \sqrt{1-u^2} \; \overline{\PT{x}}, u ), \qquad -1 \leq u \leq 1, \;  \overline{\PT{x}} \in \mathbb{S}^{d-1}, \label{Cylindrical}
\end{equation}
we can write the decomposition
\begin{equation} \label{Measure.Decomposition}
\dd \sigma_{d}(\PT{x}) = \frac{\omega_{d-1}}{\omega_{d}} \left( 1 - u^{2} \right)^{d/2-1} \, \dd u \, \dd \sigma_{d-1}(\overline{\PT{x}}).
\end{equation}
Here $\omega_{d}$ is the surface area of $\mathbb{S}^{d}$ and the ratio of these areas can be evaluated as
\begin{equation}
\omega_0 = 2, \quad \frac{\omega_{d}}{\omega_{d-1}} = \int_{-1}^{1} \left( 1 - u^2 \right)^{d/2-1} \dd u =
\frac{\sqrt{\pi}\gammafcn(d/2)}{\gammafcn((d+1)/2)} = 2^{d-1} \frac{\left[\gammafcn(d/2)\right]^{2}}{\gammafcn(d)}. \label{omega.ratio}
\end{equation}

We shall refer to a non-negative lower semi-continuous function $Q:\mathbb{S}^d \to [0,\infty]$ such that $Q(\PT{x})<\infty$ on a set of positive Lebesgue surface area measure as an {\em external field}. The weighted energy associated with $Q$ is then given by
\begin{equation} \label{energy}
I_Q(\mu) \DEF \mathcal{I}_s(\mu) + 2 \int Q(\PT{x}) \dd \mu(\PT{x}).
\end{equation}

\begin{definition}
The minimal energy problem on the sphere in the presence of the external field $Q$ refers to the quantity
\begin{equation} \label{Vq}
V_Q \DEF \inf \left\{ I_Q(\mu) : \mu \in \mathcal{M}(\mathbb{S}^d) \right\}.
\end{equation}
A measure $\mu_Q=\mu_{Q,s} \in \mathcal{M}(\mathbb{S}^d)$ such that $I_Q(\mu_Q) = V_Q$ is called an {\em $s$-extremal} (or {\em $s$-equilibrium}) {\em measure associated with $Q$}.
\end{definition}

The discretized version of the minimal $s$-energy problem is also of interest. The associated optimal point configurations have a variety of possible 
applications, such as for generating radial basis functions on the sphere that are used in the numerical solutions to PDEs (see, e.g., \cite{LeGSloWen2012a}, \cite{LeGSloWen2012b}). 

Given a positive integer $N$, we consider the optimization problem
\begin{equation} \label{Eq}
\mathcal{E}_{Q,N}\DEF \min_{\{ \PT{x}_1, \ldots, \PT{x}_N \}\subset \mathbb{S}^d} \sum_{1\leq i\not= j \leq N} \big[ k_s(\PT{x}_i,\PT{x}_j)+Q(\PT{x}_i)+Q(\PT{x}_j) \big].
\end{equation}
A system that minimizes the discrete energy is called an {\em optimal (minimal) $s$-energy $N$-point configuration w.r.t. $Q$}. The field-free case $Q\equiv 0$ is particularly important.
 
The following Frostman-type result as stated in \cite{DraSaf2007} summarizes the existence and uniqueness properties for $s$-equilibrium measures on $\mathbb{S}^d$ in the presence of external fields (see also \cite[Theorem I.1.3]{SafTot1997} for the complex plane case and \cite{Zorii2003} for more general spaces).

\begin{proposition} \label{prop:1}
Let $0\leq s<d$. For the minimal $s$-energy problem on $\mathbb{S}^d$ with external field $Q$ the following properties hold:
\begin{enumerate}
\item[\rm (a)] $V_Q$ is finite.
\item[\rm (b)] There exists a unique $s$-equilibrium measure $\mu_Q =\mu_{Q,s}\in \mathcal{M}(\mathbb{S}^d)$ associated with~$Q$.
Moreover, the support $S_Q$ of this measure is contained in the compact set $E_M \DEF \{ \PT{x} \in \mathbb{S}^d : Q(\PT{x}) \leq M \}$ for some $M>0$.
\item[\rm (c)] The measure $\mu_Q$ satisfies the variational inequalities
\begin{align}
U_s^{\mu_Q}(\PT{x}) + Q(\PT{x}) &\geq F_Q \quad \text{q.e. on $\mathbb{S}^d$,} \label{VarEq1} \\
U_s^{\mu_Q}(\PT{x}) + Q(\PT{x}) &\leq F_Q \quad \text{for all $\PT{x} \in S_Q$,} \label{VarEq2}
\end{align}
where
\begin{equation}
F_Q \DEF V_Q - \int Q(\PT{x}) \dd \mu_Q(\PT{x}). \label{VarConst}
\end{equation}
\item[\rm (d)] Inequalities \eqref{VarEq1} and \eqref{VarEq2} completely characterize the extremal measure $\mu_Q$ in the sense that if $\nu \in \mathcal{M}(\mathbb{S}^d)$ is a measure with finite $s$-energy such that 
\begin{align}
U_s^{\nu}(\PT{x}) + Q(\PT{x}) &\geq C \quad \text{q.e. on $\mathbb{S}^d$,} \label{VarEq3} \\
U_s^{\nu}(\PT{x}) + Q(\PT{x}) &\leq C \quad \text{for all $\PT{x} \in \supp( \nu )$} \label{VarEq4}
\end{align}
for some constant $C$, we have then $\mu_Q = \nu$ and $F_Q = C$.
\end{enumerate}
\end{proposition}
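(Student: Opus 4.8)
The plan is to follow the classical route of weighted potential theory, establishing the parts in the order (a), existence in (b), (c), and finally the support statement in (b) together with uniqueness and (d), since the variational inequalities feed the later parts. \emph{Part (a).} Because $Q$ is lower semicontinuous and finite on a set of positive surface measure, I would first pick a compact set $K \subset \mathbb{S}^d$ with $\sigma_d(K) > 0$ on which $Q$ is bounded; the normalized restriction $\sigma_d|_K / \sigma_d(K)$ is dominated by a multiple of $\sigma_d$, so it has finite $s$-energy by \eqref{SphereEnergy} and finite weighted energy, giving $V_Q < \infty$. For the lower bound, the kernel $k_s$ is bounded below on the compact sphere (it is positive for $s>0$, and $k_0 \geq \log(1/2)$ since $|\PT{x}-\PT{y}| \leq 2$), so $\mathcal{I}_s(\mu)$ is uniformly bounded below on $\mathcal{M}(\mathbb{S}^d)$; with $Q \geq 0$ this yields $I_Q(\mu) \geq \mathcal{I}_s(\mu) > -\infty$, whence $V_Q$ is finite. \emph{Existence.} I would take a minimizing sequence $\mu_n$ and use weak-$*$ compactness of $\mathcal{M}(\mathbb{S}^d)$ to pass to a weak-$*$ limit $\mu_Q$. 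Since $k_s$ and $Q$ are lower semicontinuous, both $\mu \mapsto \mathcal{I}_s(\mu)$ and $\mu \mapsto \int Q \, \dd\mu$ are weak-$*$ lower semicontinuous, so $I_Q(\mu_Q) \leq \liminf_n I_Q(\mu_n) = V_Q$ and $\mu_Q$ is a minimizer of finite energy.

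\emph{Part (c).} This is the heart of the argument. Writing $f \DEF U_s^{\mu_Q} + Q$ and using \eqref{VarConst}, a direct computation gives $\int f \, \dd\mu_Q = \mathcal{I}_s(\mu_Q) + \int Q \, \dd\mu_Q = F_Q$. For \eqref{VarEq1} I would argue by contradiction: if the set $\{f < F_Q\}$ had positive capacity, it would carry a probability measure $\nu$ of finite energy with $f < F_Q - \epsilon$ on $\supp \nu$ for some $\epsilon > 0$; differentiating the quadratic $t \mapsto I_Q((1-t)\mu_Q + t\nu)$ at $t = 0$ then yields $2\big( \int f \, \dd\nu - F_Q \big) \leq -2\epsilon < 0$, so $I_Q((1-t)\mu_Q + t\nu) < V_Q$ for small $t > 0$, contradicting minimality. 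The upper bound \eqref{VarEq2} follows by combining $f \geq F_Q$ q.e.\ with $\int f \, \dd\mu_Q = F_Q$: since $\mu_Q$ has finite energy it ignores sets of capacity zero, forcing $f = F_Q$ $\mu_Q$-a.e., and lower semicontinuity of $U_s^{\mu_Q}$ upgrades this to $f \leq F_Q$ throughout $S_Q$.

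\emph{Support, uniqueness, and (d).} The support claim in (b) is immediate from \eqref{VarEq2}: as $U_s^{\mu_Q}$ is bounded below, \eqref{VarEq2} gives $Q \leq F_Q - \inf_{\mathbb{S}^d} U_s^{\mu_Q}$ on $S_Q$, so setting $M \DEF F_Q - \inf_{\mathbb{S}^d} U_s^{\mu_Q}$ yields $S_Q \subset E_M$. For (d) the essential tool is the strict positive-definiteness of the Riesz ($0<s<d$) and logarithmic ($s=0$) kernels on $\mathbb{S}^d$: for every signed measure $\tau$ of total mass zero and finite energy one has $\mathcal{I}_s(\tau) \geq 0$, with equality only if $\tau = 0$. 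Given a competitor $\nu$ obeying \eqref{VarEq3}--\eqref{VarEq4}, I would expand $\mathcal{I}_s(\mu_Q - \nu) = \int \big[(U_s^{\mu_Q}+Q)-(U_s^{\nu}+Q)\big] \, \dd(\mu_Q - \nu)$ into four integrals and bound each using the appropriate inequality --- each integral over a support equals its constant, while each q.e.\ inequality integrated against a finite-energy measure dominates its constant --- to obtain $\mathcal{I}_s(\mu_Q - \nu) \leq 0$. Positive-definiteness then forces $\nu = \mu_Q$, and comparing constants gives $C = F_Q$; the special case $\nu = \mu_Q$ yields the uniqueness asserted in (b).

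I expect the main obstacle to be twofold: first, invoking the strict positive-definiteness of the kernel (routine for $0<s<d$ via the positive Fourier transform of the Riesz kernel, but requiring a separate argument in the logarithmic case $s=0$), and second, the careful bookkeeping of the quasi-everywhere versus $\mu_Q$-almost-everywhere distinction, so that the perturbation in (c) and the energy expansion in (d) only ever involve finite quantities and measures that do not charge sets of capacity zero.
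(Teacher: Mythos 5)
The first thing to note is that the paper does not prove this proposition at all: it is stated as a known Frostman-type result quoted from \cite{DraSaf2007}, with pointers to \cite[Theorem I.1.3]{SafTot1997} and \cite{Zorii2003}, so there is no internal proof to compare against. Your argument is essentially the standard proof found in those references, transplanted to the sphere, and its outline is sound: (a) via a compact set of positive measure on which $Q$ is bounded (such a set exists because $\{Q\leq M\}$ is closed by lower semicontinuity and exhausts $\{Q<\infty\}$); existence via weak-$*$ compactness and lower semicontinuity of both energy terms; (c) via the first-order perturbation $(1-t)\mu_Q+t\nu$ and the descent argument on the support --- your use of lower semicontinuity there goes in the correct direction, since every $\PT{x}\in S_Q$ is a limit of points where $U_s^{\mu_Q}+Q=F_Q$; and (b)/(d) via the energy principle. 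Two details deserve tightening. First, in (c), to produce the competitor $\nu$ you should replace $\{U_s^{\mu_Q}+Q<F_Q\}$ by one of the closed (hence compact) sets $\{U_s^{\mu_Q}+Q\leq F_Q-1/n\}$ and invoke countable subadditivity of capacity; on that set $Q$ is automatically bounded (because $U_s^{\mu_Q}$ is bounded below on $\mathbb{S}^d$), which is what makes the quadratic expansion of $I_Q((1-t)\mu_Q+t\nu)$ finite and legitimate. Second, the phrase ``the special case $\nu=\mu_Q$ yields the uniqueness'' is garbled: uniqueness in (b) follows by applying (d) to a \emph{second} minimizer (which satisfies \eqref{VarEq1}--\eqref{VarEq2} by part (c)), or alternatively by the parallelogram identity together with strict positive definiteness. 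That strict positive definiteness of $k_s$ on signed measures of total mass zero is indeed the one genuinely non-elementary ingredient, as you flag: for $0<s<d$ it is classical, and for $s=0$ it follows from the spherical harmonic expansion of $\log(1/|\PT{x}-\PT{y}|)$ on $\mathbb{S}^d$, whose nonconstant coefficients are positive while the constant term is annihilated by neutrality. With these points supplied, your proof is complete and coincides with the route taken in the literature the paper cites.
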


\begin{remark} We note that a similar statement holds true when $\mathbb{S}^d$ is replaced by any compact subset $K\subset \mathbb{S}^d$ of positive $s$-capacity. 
\end{remark}

The explicit determination of $s$-equilibrium measures or their support is not an easy task. In \cite{DraSaf2007} an external field exerted by a single point mass on the sphere was applied to establish that, in the field-free case, minimal $s$-energy $N$-point systems on~$\mathbb{S}^d$, as defined in \eqref{Eq}, are ``well-separated'' for $d-2<s<d$. Axis-supported external fields were studied in \cite{BraDraSaf2009} and rotationally invariant external fields on $\mathbb{S}^2$ in \cite{Bil2016}. The separation of minimal $s$-energy $N$-point configurations for more general external fields, namely Riesz $s$-potentials of signed measures with negative charge outside the unit sphere, was established in \cite{BraDraSaf2014}.

Here we shall focus primarily on the exceptional case when $s=d-2$ and $Q$ is the external field exerted by finitely many localized charge distributions. Let $\PT{a}_1,\PT{a}_2,\dots, \PT{a}_m \in \mathbb{S}^d$ be $m$ fixed points with associated positive charges $q_1,q_2,\dots, q_m$. Then the external field is given by
\begin{equation} \label{DiscrField}
Q(\PT{x}):=\sum_{i=1}^m q_i \, k_{d-2}(\PT{a}_i , \PT{x}).
\end{equation} 
For sufficiently small charges $q_1, \dots, q_m$ we completely characterize the ${(d-2)}$-equilibrium measure for the external field~\eqref{DiscrField}. 

The outline of the paper is as follows. In Section~\ref{Pre}, we introduce some notion from potential theory utilized in our analysis. In Section~\ref{Log}, we present the important case of the unit sphere in the $3$-dimensional space and logarithmic interactions.  
An interesting corollary in its own right for discrete external fields in the complex plane is exhibited as well. 
The situation when $d \geq 3$, considered in Section~\ref{Rie}, is more involved as there is a loss of mass in the balayage process. 
Finally, in Section \ref{Int}, we derive a result on regions free of optimal points and formulate an open problem.

\section{Signed Equilibria, Mhaskar-Saff $\mathcal{F}$-Functional, and Balayage}
\label{Pre}

A significant role in our analysis is played by the so-called {\em signed equilibrium} (see~\cite{BraDraSaf2009, BraDraSaf2014}).

\begin{definition} \label{signed} Given a compact subset $E\subset \mathbb{R}^p$, $p\geq 3$, and an external field $Q$, we call a signed measure $\eta_{E,Q}=\eta_{E,Q,s}$ supported on $E$ and of total charge $\eta_{E,Q}(E)=1$ {\em a signed $s$-equilibrium on $E$ associated with $Q$} if its weighted Riesz $s$-potential is constant on $E$:
\begin{equation}
U_s^{\eta_{E,Q}}(\PT{x}) + Q(\PT{x}) = F_{E,Q} \qquad \text{for all $\PT{x} \in E$.} \label{signedeq}
\end{equation}
\end{definition}

We note that if the signed equilibrium exists, it is unique (see \cite[Lemma 23]{BraDraSaf2009}). In view of \eqref{VarEq1} and \eqref{VarEq2}, the signed equilibrium on $S_Q$ is actually a non-negative measure and coincides with the $s$-extremal measure associated with $Q$, and hence can be obtained by solving a singular integral equation on $S_Q$. Moreover, for the equilibrium support we have that $S_Q\subset {\rm supp}(\eta_{E,Q}^+)$ whenever $S_Q\subset E\subset \mathbb{S}^d$ (see \cite[Theorem 9]{BraDraSaf2014}).

An important tool in our analysis is the Riesz analog of the \emph{Mhaskar-Saff \linebreak {$F$-functional}} from classical logarithmic potential theory in the plane (see \cite{MhaSaf1985} and \cite[Chapter IV, p. 194]{SafTot1997}).

\begin{definition} 
The {\em $\mathcal{F}_s$-functional} of a compact subset $K\subset \mathbb{S}^d$ of
positive $s$-capacity is defined as 
\begin{equation} \label{Functional}
\mathcal{F}_s(K) \DEF W_s(K) + \int Q(\PT{x}) \, \dd \mu_K(\PT{x}),
\end{equation}
where $W_s(K)$ is the $s$-energy of $K$ and $\mu_K$ is the $s$-equilibrium measure on $K$.
\end{definition}

\begin{remark} \label{FuncSignedEqRel}
As pointed out in \cite{BraDraSaf2009, BraDraSaf2014}, when $d-2\leq s<d$, a relationship exists between the signed $s$-equilibrium constant in \eqref{signedeq} and the $\mathcal{F}_s$-functional \eqref{Functional}, namely $\mathcal{F}_s (K)=F_{K,Q}$. Moreover, the equilibrium support minimizes the $\mathcal{F}$-functional; i.e., if $d-2\leq s<d$ and $Q$ is an external field on $\mathbb{S}^d$, then the $\mathcal{F}_s$-functional is minimized for
$S_Q = \supp (\mu_Q)$ (see \cite[Theorem 9]{BraDraSaf2009}).
\end{remark}

A tool we use extensively is the {\em Riesz $s$-balayage measure} (see \cite[Section~4.5]{Landkof1972}). Given a measure $\nu$ supported on $\mathbb{S}^d$ and a compact subset $K\subset \mathbb{S}^d$, the measure $\widehat{\nu}:=\bal_s (\nu,K)$ is called the {\em Riesz $s$-balayage} of $\nu$ onto $K$, $d-2\leq s <d$, if $\widehat{\nu}$ is supported on $K$ and 
\begin{equation}\label{sBal}
\begin{split}
U_s^{\widehat{\nu}}(\PT{x}) &= U_s^{\nu}(\PT{x}) \qquad \text{on $K$,} \\
U_s^{\widehat{\nu}}(\PT{x}) &\leq U_s^{\nu}(\PT{x}) \qquad \text{on $\mathbb{S}^d$.}
\end{split}
\end{equation}
In general, there is some loss of mass, namely $\widehat{\nu}(\mathbb{S}^d)< \nu(\mathbb{S}^d)$. However, in the logarithmic interaction case $s=0$ and $d=2$, the mass of the balayage measures is preserved, but as in the classical complex plane potential theory we have equality of potentials up to a constant term
\begin{equation}\label{LogBal}
\begin{split}
U_0^{\widehat{\nu}}(\PT{x}) &= U_0^{\nu}(\PT{x}) + C \qquad \text{on $K$,}\\
U_0^{\widehat{\nu}}(\PT{x}) &\leq U_0^{\nu}(\PT{x}) + C \qquad \text{on $\mathbb{S}^2$.}
\end{split}
\end{equation}
Balayage of a signed measure $\eta$ is achieved by taking separately the balayage of its positive and its negative part in the Jordan decomposition $\eta=\eta^+ - \eta^-$. An important property is that we can take balayage in steps: if $F\subset K \subset \mathbb{S}^d$, then
\begin{equation} \label{eq:balayage.in.steps}
\bal_s( \nu, F) = \bal_s( \bal_s(\nu,K), F ).
\end{equation}
We also use the well-known relation
\begin{equation} \label{eq:balayage.partition}
\bal_s( \nu, K) = \nu_{\vert_{K}} + \bal_s( \nu_{\vert_{\mathbb{S}^d \setminus K}}, K). 
\end{equation}

\section{Logarithmic Interactions on $\mathbb{S}^2$} 
\label{Log}

We first state and prove our main theorem for the case of logarithmic interactions on $\mathbb{S}^2$.
We associate with $Q$ (or equivalently with $\{ \PT{a}_i \}$ and $\{ q_i \}$) the total charge 
\begin{equation*} 
q := q_1+ \cdots + q_m
\end{equation*}
the vector 
\begin{equation}\label{epsilon}
\boldsymbol{\epsilon}=(\epsilon_1,\dots, \epsilon_m), \qquad \epsilon_i := 2 \, \sqrt{ \frac{ q_i }{1 + q}}, \quad i = 1, \dots, m,
\end{equation}
and the set  
\begin{equation} \label{Sigma_epsilon}
\Sigma_{\boldsymbol{\epsilon}}=\bigcap_{i=1}^m \Sigma_{i,\epsilon_i}, \qquad \Sigma_{i,\epsilon} := \big\{ \PT{x} \in \mathbb{S}^2 : | \PT{x}-\PT{a}_i | \geq \epsilon \big\}, \quad i = 1, \dots, m, \, \epsilon \geq 0.
\end{equation}
More generally, with any vector $\boldsymbol{\gamma}=(\gamma_1,\dots, \gamma_m)$ with non-negative components we associate the set $\Sigma_{\boldsymbol{\gamma}}=\bigcap_{i=1}^m \Sigma_{i,\gamma_i}$. Note: if $\boldsymbol{\gamma} \leq \boldsymbol{\epsilon}$ (i.e., $\gamma_i\leq \epsilon_i$, $1 \leq i \leq m$), then $\Sigma_{\boldsymbol{\epsilon}}\subset \Sigma_{\boldsymbol{\gamma}}$.

\begin{theorem} \label{thm:main}
Let $d = 2$ and $s = 0$. Let $Q$, $\boldsymbol{\epsilon}$, and $\Sigma_{\boldsymbol{\epsilon}}$ be defined by \eqref{DiscrField}, \eqref{epsilon}, and \eqref{Sigma_epsilon}. Suppose that $\Sigma_{i,\epsilon_i}^c \cap \Sigma_{j,\epsilon_j}^c = \emptyset$, $1 \leq i < j \leq m$ ($K^c$ denotes the  complement of $K$ relative to the sphere). Then the logarithmic extremal measure associated with $Q$ is $\mu_Q = \left( 1 + q \right) {\sigma_2}_{\vert_{\Sigma_{\boldsymbol{\epsilon}}}}$ and the extremal support is $S_Q = \Sigma_{\boldsymbol{\epsilon}}$.
\end{theorem}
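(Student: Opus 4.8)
The plan is to verify the variational characterization in Proposition~\ref{prop:1}(d) for the explicit candidate $\nu := (1+q)\,{\sigma_2}_{\vert_{\Sigma_{\boldsymbol{\epsilon}}}}$; once $\nu$ is shown to be a probability measure satisfying \eqref{VarEq3}--\eqref{VarEq4}, uniqueness forces $\mu_Q=\nu$ and $S_Q=\Sigma_{\boldsymbol{\epsilon}}$. First I would check that $\nu\in\mathcal{M}(\mathbb{S}^2)$. Placing $\PT{a}_i$ at a pole and using the height variable $u$ from \eqref{Cylindrical}, the cap $\Sigma_{i,\epsilon_i}^c=\{|\PT{x}-\PT{a}_i|<\epsilon_i\}$ is $\{u>1-\epsilon_i^2/2\}$, so $\sigma_2(\Sigma_{i,\epsilon_i}^c)=\epsilon_i^2/4=q_i/(1+q)$ by \eqref{Measure.Decomposition}. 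The disjointness hypothesis lets the cap masses add, giving $\nu(\mathbb{S}^2)=(1+q)\bigl(1-\sum_i q_i/(1+q)\bigr)=1$; this is exactly what the choice \eqref{epsilon} of $\epsilon_i$ is engineered to produce.

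The crux is the following single-cap identity, which I expect to be the main obstacle: outside its cap, the uniform charge ${\sigma_2}_{\vert_{\Sigma_{i,\epsilon_i}^c}}$ has the same logarithmic potential, up to an additive constant, as a point mass of equal total charge placed at the center $\PT{a}_i$, i.e.
\begin{equation*}
U_0^{{\sigma_2}_{\vert_{\Sigma_{i,\epsilon_i}^c}}}(\PT{x}) = \sigma_2(\Sigma_{i,\epsilon_i}^c)\,U_0^{\delta_{\PT{a}_i}}(\PT{x}) + c_i, \qquad \PT{x}\in\Sigma_{i,\epsilon_i}.
\end{equation*}
I would prove this by placing $\PT{a}_i$ at the north pole, integrating first over the parallel $\mathbb{S}^1$ with the classical mean-value formula $\tfrac{1}{2\pi}\int_0^{2\pi}\log(A-B\cos\phi)\,d\phi=\log\tfrac{A+\sqrt{A^2-B^2}}{2}$, and then exploiting the algebraic factorization $\tfrac{A+\sqrt{A^2-B^2}}{2}=(1-t)(1+u)$ (with $t,u$ the heights of $\PT{x}$ and of the integration variable, $t<u$ outside the cap). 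This separation of variables makes the $\PT{x}$-dependence collapse to $-\tfrac12\log(1-t)$, which is precisely $U_0^{\delta_{\PT{a}_i}}(\PT{x})$ up to constants, while the remaining $u$-integral contributes only the constant $c_i$.

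With this identity in hand, the equality \eqref{VarEq4} on $\Sigma_{\boldsymbol{\epsilon}}$ is immediate. Using $Q=U_0^{\nu_0}$ with $\nu_0=\sum_i q_i\delta_{\PT{a}_i}$, the splitting ${\sigma_2}_{\vert_{\Sigma_{\boldsymbol{\epsilon}}}}=\sigma_2-\sum_i{\sigma_2}_{\vert_{\Sigma_{i,\epsilon_i}^c}}$, and the constancy $U_0^{\sigma_2}\equiv W_0(\mathbb{S}^2)$, I would write for $\PT{x}\in\Sigma_{\boldsymbol{\epsilon}}$
\begin{equation*}
U_0^{\nu}(\PT{x})+Q(\PT{x}) = (1+q)W_0(\mathbb{S}^2) - (1+q)\sum_{i=1}^m\Bigl[U_0^{{\sigma_2}_{\vert_{\Sigma_{i,\epsilon_i}^c}}}(\PT{x}) - \sigma_2(\Sigma_{i,\epsilon_i}^c)\,U_0^{\delta_{\PT{a}_i}}(\PT{x})\Bigr],
\end{equation*}
where I used $q_i=(1+q)\sigma_2(\Sigma_{i,\epsilon_i}^c)$. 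Each bracket equals the constant $c_i$ by the identity (since $\Sigma_{\boldsymbol{\epsilon}}\subset\Sigma_{i,\epsilon_i}$ for every $i$), so the left side is the constant $F_Q:=(1+q)W_0(\mathbb{S}^2)-(1+q)\sum_i c_i$, establishing \eqref{VarEq4}.

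It remains to verify \eqref{VarEq3} on the caps. Fix $\PT{x}$ in the cap $\Sigma_{j,\epsilon_j}^c$. By disjointness $\PT{x}\in\Sigma_{i,\epsilon_i}$ for $i\neq j$, so every bracket above with $i\neq j$ still equals $c_i$, and $U_0^{\nu}(\PT{x})+Q(\PT{x})-F_Q=-(1+q)\bigl[g_j(\PT{x})-c_j\bigr]$, where $g_j:=U_0^{{\sigma_2}_{\vert_{\Sigma_{j,\epsilon_j}^c}}}-\sigma_2(\Sigma_{j,\epsilon_j}^c)\,U_0^{\delta_{\PT{a}_j}}$. Thus \eqref{VarEq3} reduces to $g_j\leq c_j$ inside the cap. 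I would obtain this by extending the polar computation to field points of height $t>1-\epsilon_j^2/2$, where the $u$-integral now splits at $u=t$ via $\tfrac{A+\sqrt{A^2-B^2}}{2}=(1+t)(1-u)$ for $u<t$; differentiating in $t$ the resulting pieces telescopes to $g_j'(t)=\dfrac{(1-\epsilon_j^2/2)-t}{2(1-t^2)}<0$ throughout the cap. Since $g_j$ is continuous and equals $c_j$ on the bounding circle $t=1-\epsilon_j^2/2$, monotonicity yields $g_j<c_j$ strictly inside, so \eqref{VarEq3} holds with strict inequality there. Hence no mass can lie on the caps, $S_Q=\Sigma_{\boldsymbol{\epsilon}}$, and Proposition~\ref{prop:1}(d) identifies $\mu_Q=\nu$ with the above $F_Q$.
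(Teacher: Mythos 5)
Your proposal is correct, and it reaches the conclusion by a genuinely different route than the paper. The paper never works directly with the candidate measure alone: it builds, for every $\boldsymbol{\gamma}\leq\boldsymbol{\epsilon}$, the signed equilibrium $\eta_{\Sigma_{\boldsymbol{\gamma}},Q}=(1+q)\bal_0(\sigma_2,\Sigma_{\boldsymbol{\gamma}})-\bal_0(\sum_i q_i\delta_{\PT{a}_i},\Sigma_{\boldsymbol{\gamma}})$, decomposes it into single-cap balayages via balayage-in-steps and the non-overlap hypothesis, imports the explicit potential formulas \eqref{BalSigma}--\eqref{BalLog} from \cite{BraDraSaf2009}, and identifies $\boldsymbol{\epsilon}$ as the unique parameter at which this signed measure loses its negative boundary component $\left(1+q\right)\sum_i\bigl(\frac{\gamma_i^2}{4}-\frac{\epsilon_i^2}{4}\bigr)\beta_i$ and simultaneously satisfies the Frostman inequalities. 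You instead fix $\boldsymbol{\gamma}=\boldsymbol{\epsilon}$ from the start and verify Proposition~\ref{prop:1}(d) bare-handed: your single-cap identity (a spherical Newton's theorem for the log kernel, proved by the azimuthal mean-value formula and the factorization $1-tu+|t-u|=(1\mp t)(1\pm u)$) is equivalent in content to the paper's balayage facts $\widehat{\nu}=\sigma_2(\Sigma_\gamma^c)\beta$, $\widehat{\delta_{\PT{a}}}=\beta$, but you prove it from scratch rather than cite it; and your monotonicity computation $g_j'(t)=\frac{t_j-t}{2(1-t^2)}<0$ inside the cap (which I checked; it is right) plays exactly the role of the paper's analysis of the function $f$ in \eqref{f-function}. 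What your approach buys is a self-contained, elementary proof with no potential-theoretic machinery beyond Frostman's characterization. What the paper's approach buys is structural information: by tracking the whole family $\boldsymbol{\gamma}\leq\boldsymbol{\epsilon}$ it shows \emph{why} $\boldsymbol{\epsilon}$ is critical (the positivity threshold of the signed equilibrium) rather than verifying a formula handed down in advance, and the same balayage framework is what carries over to the Riesz case $s=d-2$, $d\geq3$, in Section~\ref{Rie}, where mass is lost under balayage and a direct analogue of your computation would be much harder. One small housekeeping remark: Proposition~\ref{prop:1}(d) requires the comparison measure to have finite logarithmic energy; this is immediate for $\nu=(1+q){\sigma_2}_{\vert_{\Sigma_{\boldsymbol{\epsilon}}}}$ since it has bounded density, but you should say so.
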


\begin{remark} \label{rmk:main} The theorem has the following electrostatics interpretation. As positively charged particles $\PT{a}_i$ are introduced on a positively pre-charged unit sphere, they 
create 
charge-free regions which we call {\em regions of electrostatic influence}.  The theorem then states that if the potential interaction is logarithmic and the charges of the particles are sufficiently small (so that the regions of influence do not overlap), then these regions are perfect spherical caps $\Sigma_{i,\epsilon_i}^c$ whose radii depend only on the amount of charge and the position of the particles. In Section \ref{Int}, we partially investigate what happens when the $q_i$'s increase beyond the critical values imposed by the non-overlapping conditions $\Sigma_{i,\epsilon_i}^c \cap \Sigma_{i,\epsilon_j}^c = \emptyset$, $1\leq i <j \leq m$.
\end{remark}

\begin{proof} 
Let $m = 1$. This case has already been solved in \cite{BraDraSaf2009}. By \cite[Theorem 17]{BraDraSaf2009}, the signed equilibrium on $\Sigma_{\gamma}$ associated with $Q( \PT{x} ) := q \log \frac{1}{| \PT{x} - \PT{a} |}$, $\PT{a} \in \mathbb{S}^2$, is given by 
\begin{align} 
\eta_{\Sigma_\gamma,Q} 
&= \left( 1 + q \right) \bal_0( \sigma_2, \Sigma_\gamma ) - q \, \bal_0( \delta_{\PT{a}}, \Sigma_\gamma ) \notag \\
&= \left( 1 + q \right) {\sigma_2}_{\vert_{\Sigma_\gamma}} + \left( 1 + q \right) \left( \frac{\gamma^2}{4} - \frac{q}{1 + q} \right) \beta, \label{eq:eta.simple}
\end{align}
where $\beta$ is the normalized Lebesgue measure on the boundary circle of $\Sigma_\gamma$. 
%

The logarithmic extremal measure on $\mathbb{S}^2$ associated with $Q$ is then given by
\begin{equation*}
\mu_Q = \left( 1 + q \right) {\sigma_2}_{\vert_{\Sigma_\epsilon}}, \qquad \text{where $\epsilon = 2 \, \sqrt{\frac{q}{1+q}}$.}
\end{equation*}

Let $\xi := \langle \PT{x} , \PT{a} \rangle$ and $\gamma^2 = 2 ( 1 - t )$, where $t$ is the projection of the boundary circle $\partial \Sigma_{\gamma}$ onto the $\PT{a}$-axis. For future reference, by \cite[Lemmas 39 and 41]{BraDraSaf2009} we have 
\begin{equation} \label{BalSigma}
U_0^{\bal_0( \sigma_2, \Sigma_\gamma )}( \PT{x} ) = 
\begin{cases}
W_0( \Sigma_\gamma ), & \PT{x} \in \Sigma_\gamma, \\[.5em]
W_0( \Sigma_\gamma ) + \dfrac{1}{2} \log \dfrac{1 + t}{1 + \xi}, & \PT{x} \in \Sigma_\gamma^c
\end{cases}
\end{equation}
and
\begin{equation} \label{BalDelta}
U_0^{\bal_0( \delta_{\PT{a}}, \Sigma_\gamma )}( \PT{x} ) = U_0^{\delta_{\PT{a}}}( \PT{x} ) +
\begin{cases}
\dfrac{1}{2} \log \dfrac{1-t}{1+t} - \dfrac{1}{2} \log \dfrac{1-t}{2}, & \PT{x} \in \Sigma_\gamma, \\[.5em]
\dfrac{1}{2} \log \dfrac{1-\xi}{1+\xi} - \dfrac{1}{2} \log \dfrac{1-t}{2}, & \PT{x} \in \Sigma_\gamma^c.
\end{cases}
\end{equation}
Moreover, $\widehat{\nu} = \bal_0( {\sigma_2}_{\vert_{\Sigma_{\gamma}^c}}, \Sigma_{\gamma})$ and $\widehat{\delta_{\PT{a}}} = \bal_0( \delta_{\PT{a}}, \Sigma_{\gamma})$ are multiples of $\beta$:
\begin{equation} \label{BalLog}
\widehat{\nu} = \sigma_2(\Sigma_{\gamma}^c) \, \beta =\frac{ \gamma^2}{4} \beta=\frac{1-t}{2}\beta, \quad \quad \widehat{\delta_{\PT{a}}}=\beta,
\end{equation}

Let $m \geq 2$. 
First, we determine the signed equilibrium on the set $\Sigma_{\boldsymbol{\gamma}}$, $\boldsymbol{\gamma} \leq \boldsymbol{\epsilon}$, associated with~$Q$. 
%
%
%
We consider the signed measure 
\begin{equation*}
\tau := \left( 1 + q \right) \bal_0( \sigma_2, \Sigma_{\boldsymbol{\gamma}} ) - \bal_0( q_1 \, \delta_{\PT{a}_1} + \cdots + q_m \, \delta_{\PT{a}_m}, \Sigma_{\boldsymbol{\gamma}} ). 
\end{equation*}
As balayage under logarithmic interaction is linear and preserves mass, we have\footnote{The mass of a signed measure $\mu$ is defined as $\| \mu \| := \int \dd \mu$.} 
\begin{equation*}
\| \tau \| = \left( 1 + q \right) \| \sigma_2 \| - \sum_{i = 1}^m q_i \, \| \delta_{\PT{a}_i} \| = 1 + q - \sum_{i = 1}^m q_i = 1. 
\end{equation*}

The hypotheses on $\Sigma_{\boldsymbol{\epsilon}}$ and $\Sigma_{\boldsymbol{\epsilon}} \subset \Sigma_{\boldsymbol{\gamma}}$, $\boldsymbol{\gamma} \leq \boldsymbol{\epsilon}$, imply the non-overlapping conditions
\begin{equation*}
\Sigma_{i, \gamma_i}^c \cap \Sigma_{j, \gamma_j}^c = \emptyset, \qquad 1 \leq i < j \leq m.
\end{equation*}
%
%
For $i = 1, \dots, m$ let 
\begin{equation} \label{BalSeq} 
\nu_i := {\sigma_2}_{\vert_{\Sigma_{i,\gamma_i}^c}} ,\qquad  \widehat{\nu_i} := \bal_0 (\nu_i, \Sigma_{i,\gamma_i}), \qquad  \widehat{\delta_{\PT{a}_i}} := \bal_0 ( \delta_{\PT{a}_i}, \Sigma_{i,\gamma_i}).
\end{equation}
Since $\Sigma_{i,\gamma_i} \supset \Sigma_{\boldsymbol{\gamma}}$, balayage in steps (cf. \eqref{eq:balayage.in.steps}) yields
\begin{equation*}
\bal_0( \nu_i, \Sigma_{\boldsymbol{\gamma}} ) = \bal_0( \bal_0( \nu_i, \Sigma_{i,\gamma_i} ), \Sigma_{\boldsymbol{\gamma}} ) = \bal_0( \nu_i, \Sigma_{i,\gamma_i} ) = \widehat{\nu_i}.
\end{equation*}
The second step follows because $\widehat{\nu_i}$ is supported on $\partial \Sigma_{i,\gamma_i}$ which is included in $\partial \Sigma_{\boldsymbol{\gamma}}$. 
Hence
\begin{equation*}
\begin{split}
\bal_0( \sigma_2, \Sigma_{\boldsymbol{\gamma}} ) 
&= \bal_0( {\sigma_2}_{\vert_{\Sigma_{\boldsymbol{\gamma}}}} + \nu_1 + \cdots + \nu_m, \Sigma_{\boldsymbol{\gamma}} ) \\
&= {\sigma_2}_{\vert_{\Sigma_{\boldsymbol{\gamma}}}} + \sum_{i=1}^m \bal_0( \nu_i, \Sigma_{\boldsymbol{\gamma}} ) \\
&= {\sigma_2}_{\vert_{\Sigma_{\boldsymbol{\gamma}}}} + \sum_{i=1}^m \widehat{\nu_i}.
\end{split}
\end{equation*}
Likewise,
\begin{equation*}
\bal_0( \delta_{\PT{a}_i}, \Sigma_{\boldsymbol{\gamma}} ) = \bal_0( \bal_0( \delta_{\PT{a}_i}, \Sigma_{i,\gamma_i} ), \Sigma_{\boldsymbol{\gamma}} ) = \bal_0( \delta_{\PT{a}_i}, \Sigma_{i,\gamma_i} ) = \widehat{\delta_{\PT{a}_i}}.
\end{equation*}
Hence, we obtain the following representation of $\tau$: 
\begin{equation} \label{SignedEq}
\tau = \left( 1 + q \right) {\sigma_2}_{\vert_{\Sigma_{\boldsymbol{\gamma}}}} + \left( 1 + q \right) \sum_{i=1}^m  \widehat{\nu_i} - \sum_{i=1}^m q_i \, \widehat{\delta_{\PT{a}_i}}.
\end{equation}
%
%
%
%

We show that the weighted logarithmic potential of $\tau$ satisfies \eqref{signedeq}. 
Let $\PT{x} \in \Sigma_{\boldsymbol{\gamma}}$. Then $\PT{x} \in \Sigma_{i, \gamma_i}$ for every $1 \leq i \leq m$ and, by \eqref{LogBal} and \eqref{BalDelta}, for every $1 \leq i \leq m$ 
\begin{align*}
U_0^{\widehat{\nu_i}}( \PT{x} ) &= U_0^{\nu_i}( \PT{x} ) + C_i \qquad \text{on $\Sigma_{i,\gamma_i}$,} \\
U_0^{\widehat{\delta_{\PT{a}_i}}}( \PT{x} ) &= U_0^{\delta_{\PT{a}_i}}( \PT{x} ) - \frac{1}{2} \, \log \frac{1+t_i}{2} \qquad \text{on $\Sigma_{i,\gamma_i}$.} 
\end{align*}
%
%
%
Hence, computing the logarithmic potential of $\tau$ in \eqref{SignedEq} yields, after simplification,
\begin{equation*}
U_0^\tau( \PT{x} ) + Q( \PT{x} ) = \left( 1 + q \right) U_0^{\sigma_2}( \PT{x} ) + \sum_{i=1}^m \left( \left( 1 + q \right) C_i + \frac{q_i}{2} \, \log \frac{1+t_i}{2} \right), \qquad \PT{x} \in \Sigma_{\boldsymbol{\gamma}}.
\end{equation*}
Since $U_0^{\sigma_2}( \PT{x} ) = W_0( \mathbb{S}^2 )$, the weighted potential of $\tau$ is constant on $\Sigma_{\boldsymbol{\gamma}}$; i.e., $\tau$ is a signed equilibrium on $\Sigma_{\boldsymbol{\gamma}}$ associated with $Q$ and, by uniqueness, $\eta_{\Sigma_{\boldsymbol{\gamma}},Q} = \tau$ and
\begin{equation*}
F_{\Sigma_{\boldsymbol{\gamma}},Q} = \left( 1 + q \right) W_0( \mathbb{S}^2 ) + \sum_{i=1}^m \left( \left( 1 + q \right) C_i + \frac{q_i}{2} \, \log \frac{1+t_i}{2} \right).
\end{equation*}

Let $\PT{x} \in \Sigma_{\boldsymbol{\gamma}}^c$. Then $\PT{x} \in \Sigma_{i_0,\gamma_{i_0}}^c$ for some $i_0 \in \{ 1, \dots, m \}$ and $\PT{x} \in \Sigma_{i,\gamma_{i}}$ for $i \neq i_0$. Using \eqref{SignedEq}, \eqref{BalSigma}, and \eqref{BalDelta}, 
\begin{equation} \label{eq:weighted.potential.form.A}
\begin{split}
U_0^{\eta_{\Sigma_{\boldsymbol{\gamma}},Q}}( \PT{x} ) + Q( \PT{x} ) = F_{\Sigma_{\boldsymbol{\gamma}},Q} &+ \left( 1 + q \right) \left[ U_0^{\widehat{{\nu}_{i_0}}}( \PT{x} ) - \left( U_0^{{\nu}_{i_0}}( \PT{x} ) + C_{i_0} \right) \right] \\
&+ \frac{q_{i_0}}{2} \log \frac{\left( 1 + \xi_{i_0} \right) \left( 1 - t_{i_0} \right)}{\left( 1 - \xi_{i_0} \right) \left( 1 + t_{i_0} \right)}.
\end{split}
\end{equation}
Observe that the square-bracketed expression is $\leq 0$ by \eqref{LogBal}. Because of ${t_{i_0} < \xi_{i_0} < 1}$, the ratio under the logarithm is $>1$ and the logarithm tends to zero as $\xi_{i_0}$ goes to~$t_{i_0}$ and the logarithm tends to $+\infty$ as $\xi_{i_0}$ approaches $1$ from below. Using \eqref{BalSigma} again, we derive
\begin{equation}
\begin{split}
U_0^{\eta_{\Sigma_{\boldsymbol{\gamma}},Q}}( \PT{x} ) + Q( \PT{x} ) 
&= \left( 1 + q \right) W_0( \Sigma_{i_0,\gamma_{i_0}} ) + \sum_{\substack{i=1,\\ i\neq i_0}}^m \left( \left( 1 + q \right) C_i + \frac{q_i}{2} \, \log \frac{1+t_i}{2} \right) \\
&\phantom{=}+ \frac{q_{i_0}}{2} \, \log \frac{1+t_{i_0}}{2} + f( t_{i_0} ) - f( \xi_{i_0} ),
\end{split}
\end{equation}
where
\begin{equation} \label{f-function} 
f(u) := \frac{1+q-q_{i_0}}{2} \log(1+u) + \frac{q_{i_0}}{2} \log(1-u), \qquad -1 < u < 1.
\end{equation}
The function $f$ has a unique maximum at $u^* = 1 - \frac{2 q_{i_0}}{1+q}$ in the interval $(-1,1)$ for $0 < q_{i_0} < 1 + q$. 
Assuming that $-1 < t_{i_0} < u < 1$, $f^\prime(u) < 0$ if and only if 
\begin{equation*}
\max\Big\{ t_{i_0}, 1 -  \frac{2 q_{i_0}}{1+q} \Big\} < u < 1 \  \Leftrightarrow \  0 < 2( 1 - u ) < \min\Big\{ \frac{4q_{i_0}}{1+q} , \gamma_{i_0}^2 \Big\} = \min\Big\{ \epsilon_{i_0}^2 , \gamma_{i_0}^2 \Big\}.
\end{equation*}
By assumption, $\gamma_{i_0} \leq \epsilon_{i_0}$. Hence, the infimum of the weighted potential of $\eta_{\Sigma_{\boldsymbol{\gamma}},Q}$ in the set $\Sigma_{i_0,\gamma_{i_0}}^c$ is assumed on its boundary. 
 Continuity of the potentials in \eqref{eq:weighted.potential.form.A} yields
\begin{equation*}
U_0^{\eta_{\Sigma_{\boldsymbol{\gamma}},Q}}( \PT{x} ) + Q( \PT{x} ) \geq F_{\Sigma_{\boldsymbol{\gamma}},Q} \qquad \text{on $\Sigma_{i_0,\gamma_{i_0}}^c$.}
\end{equation*}
As $i_0$ was determined by $\PT{x} \in \Sigma_{\boldsymbol{\gamma}}^c$, we deduce that the last relation holds on $\Sigma_{\boldsymbol{\gamma}}^c$.

Summarizing, for each $\boldsymbol{\gamma} \leq \boldsymbol{\epsilon}$
\begin{align}
U_0^{\eta_{\Sigma_{\boldsymbol{\gamma}},Q}}( \PT{x} ) + Q( \PT{x} ) 
&\geq F_{\Sigma_{\boldsymbol{\gamma}},Q} \qquad \text{on $\Sigma_{\boldsymbol{\gamma}}^c$,} \label{SignedVarEq1}\\
U_0^{\eta_{\Sigma_{\boldsymbol{\gamma}},Q}}( \PT{x} ) + Q( \PT{x} ) 
&= F_{\Sigma_{\boldsymbol{\gamma}},Q} \qquad \text{on $\Sigma_{\boldsymbol{\gamma}}$} \label{SignedVarEq2} 
\end{align}
%
and from \eqref{SignedEq} and \eqref{BalLog},
\begin{equation*} \label{SignedEqB}
\eta_{\Sigma_{\boldsymbol{\gamma},Q}} = \left( 1 + q \right) {\sigma_2}_{\vert_{\Sigma_{\boldsymbol{\gamma}}}} + \left( 1 + q \right) \sum_{i=1}^m \left( \frac{\gamma_{i}^2}{4} - \frac{\epsilon_i^2}{4} \right) \beta_i.
\end{equation*}
It is not difficult to see that the signed equilibrium $\eta_{\Sigma_{\boldsymbol{\gamma},Q}}$ becomes a positive measure, and at the same time satisfies the characterization inequalities \eqref{VarEq3} and \eqref{VarEq4}, if and only if $\boldsymbol{\gamma} = \boldsymbol{\epsilon}$. By Proposition~\ref{prop:1}(d), $\mu_Q = \eta_{\Sigma_{\boldsymbol{\epsilon},Q}} = \left( 1 + q \right) {\sigma_2}_{\vert \Sigma_{\boldsymbol{\epsilon}}}$.
%
%
%
%
\end{proof}

Theorem \ref{thm:main} and \cite[Corollary 13]{BraDraSaf2014} yield the following result.


\begin{corollary} \label{Cor} 
Under the assumptions of Theorem~\ref{thm:main}, the optimal logarithmic energy $N$-point configurations w.r.t. $Q$ are contained in $S_Q$ for every $N \geq 2$.  
\end{corollary}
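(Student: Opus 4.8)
The plan is to read off the statement from the continuous result of Theorem~\ref{thm:main} and feed it into a general confinement principle for discrete minimizers, namely \cite[Corollary 13]{BraDraSaf2014}. That principle asserts, roughly, that if the weighted equilibrium potential of the continuous problem \emph{strictly} exceeds its Frostman constant off the equilibrium support, then for every $N$ no point of an optimal $N$-point configuration can leave that support. So the whole task reduces to assembling the three ingredients it consumes: the explicit equilibrium measure $\mu_Q$, its support $S_Q$, and a strict variational inequality on $\mathbb{S}^2 \setminus S_Q$.

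First I would record, directly from Theorem~\ref{thm:main}, that $\mu_Q = (1+q)\,{\sigma_2}_{\vert_{\Sigma_{\boldsymbol{\epsilon}}}}$ and $S_Q = \Sigma_{\boldsymbol{\epsilon}}$, with Frostman constant $F_Q = F_{\Sigma_{\boldsymbol{\epsilon}},Q}$. The remaining ingredient, the strict inequality
\begin{equation*}
U_0^{\mu_Q}(\PT{x}) + Q(\PT{x}) > F_Q \qquad \text{for all } \PT{x} \in \mathbb{S}^2 \setminus \Sigma_{\boldsymbol{\epsilon}},
\end{equation*}
is already implicit in the proof of Theorem~\ref{thm:main} and I would simply make it explicit. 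On each open cap $\Sigma_{i_0,\epsilon_{i_0}}^c$ the weighted potential equals $F_Q$ on the bounding circle $\partial\Sigma_{i_0,\epsilon_{i_0}}$, and, in the representation~\eqref{eq:weighted.potential.form.A} rewritten through the auxiliary function $f$, its only dependence on the interior variable $\xi_{i_0}$ is the single term $-f(\xi_{i_0})$. At $\boldsymbol{\gamma}=\boldsymbol{\epsilon}$ the boundary value $t_{i_0}=1-\epsilon_{i_0}^2/2$ coincides with the maximizer $u^\ast = 1 - 2q_{i_0}/(1+q)$ of $f$, so $f$ is strictly decreasing for $\xi_{i_0}\in(t_{i_0},1)$; hence $-f(\xi_{i_0}) > -f(t_{i_0})$ throughout the open cap, which is exactly the claimed strict inequality.

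With these ingredients in hand, \cite[Corollary 13]{BraDraSaf2014} delivers that every optimal logarithmic $N$-point configuration w.r.t.\ $Q$ lies in $S_Q = \Sigma_{\boldsymbol{\epsilon}}$ for $N\geq 2$; the restriction $N\geq 2$ is natural since for $N=1$ the sum in \eqref{Eq} is empty and carries no positional information. The step I would treat most carefully — and the main obstacle — is the transfer of \cite[Corollary 13]{BraDraSaf2014}, stated there for Riesz kernels with sources \emph{off} the sphere, to the present logarithmic kernel with sources \emph{on} $\mathbb{S}^2$. I expect this to be harmless, because the confinement argument relies only on the strict variational inequality above, the maximum principle for the kernel, and the fact that $Q=+\infty$ at each source automatically repels points out of the caps $\Sigma_{\boldsymbol{\epsilon}}^c$; nevertheless, verifying that the hypotheses of that corollary hold verbatim in the $s=0$ setting is where I would spend the effort.
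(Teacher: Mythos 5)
Your proposal is correct and follows essentially the same route as the paper: the paper's proof likewise invokes \cite[Corollary 13]{BraDraSaf2014} to confine optimal configurations to $\widetilde{S}_Q=\{\PT{x} : U_0^{\mu_Q}(\PT{x})+Q(\PT{x}) \leq F_Q\}$ and then identifies $\widetilde{S}_Q = S_Q$ via the strict monotonicity of $f$ from \eqref{f-function}, which is exactly your observation that at $\boldsymbol{\gamma}=\boldsymbol{\epsilon}$ the boundary value $t_{i_0}$ coincides with the maximizer of $f$, making the weighted potential strictly exceed $F_Q$ on each open cap. Your extra care about transferring \cite[Corollary 13]{BraDraSaf2014} to the $s=0$, on-sphere setting is prudent but the paper applies that corollary directly without further justification.
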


\begin{proof}
From \cite[Corollary 13]{BraDraSaf2014} we have that the optimal $N$-point configurations lie in 
\begin{equation*}
\widetilde{S}_Q=\{\PT{x} \ :\ U_0^{\mu_Q} (\PT{x})+Q(\PT{x}) \leq F_Q\}.
\end{equation*}
The strict monotonicity of the function $f$ in \eqref{f-function} yields $\widetilde{S}_Q = S_Q$.
\end{proof}

\begin{figure}[ht]
\begin{center}
\includegraphics[scale=.21]{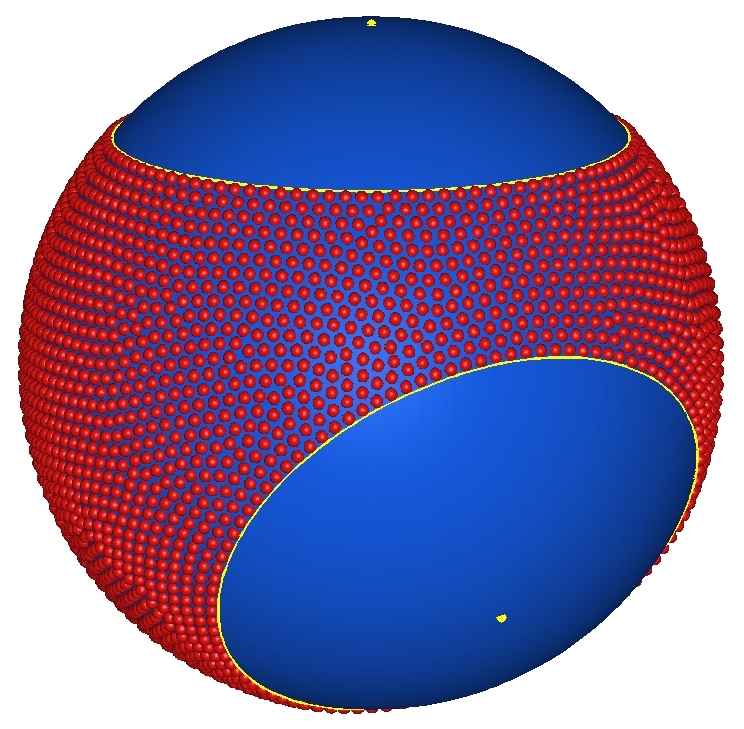} \qquad \includegraphics[scale=.21]{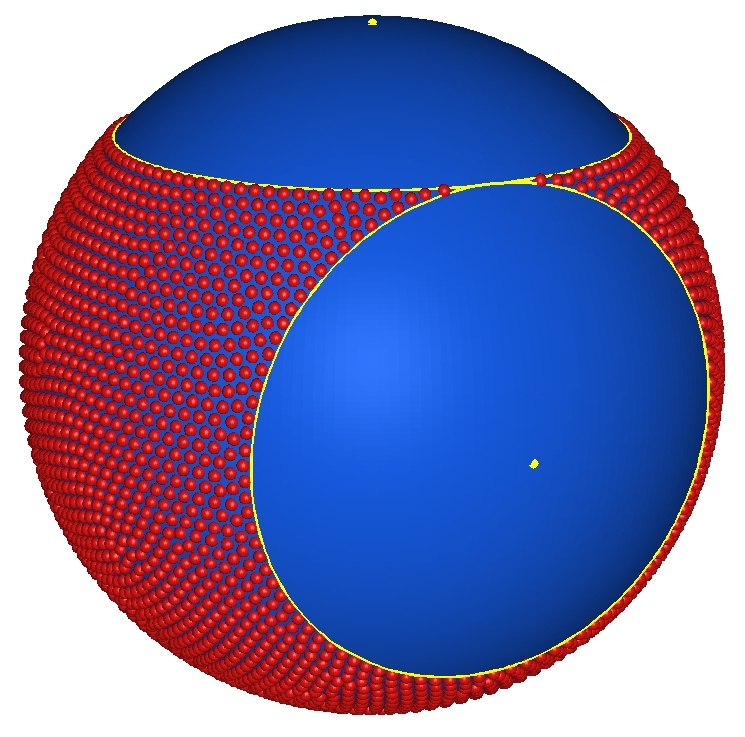} 
\end{center}
\caption{\label{fig1} Approximate log-optimal points for $m=2$, $N=4000$ with
$q_1 = q_2 = \frac{1}{4}$, $\PT{a}_1 = ( 0, 0, 1 )$ and
$\PT{a}_2 = ( \frac{\sqrt{91}}{10}, 0, -\frac{3}{10} )$ or
$\PT{a}_2 = ( \frac{4\sqrt{5}}{9}, 0, -\frac{1}{9} )$}
\end{figure}

\begin{remark} Theorem \ref{thm:main} and Corollary \ref{Cor} are illustrated in Figures \ref{fig1} and \ref{fig2} 
for two and three point sources, respectively. Observe, that the density of the (approximate) log-optimal configuration approaches the normalized surface area of the equilibrium support $S_Q=\Sigma_{\boldsymbol{\epsilon}}$. 
\end{remark}

\begin{figure}[ht]
\begin{center}
\hspace{-12mm}\includegraphics[scale=.22]{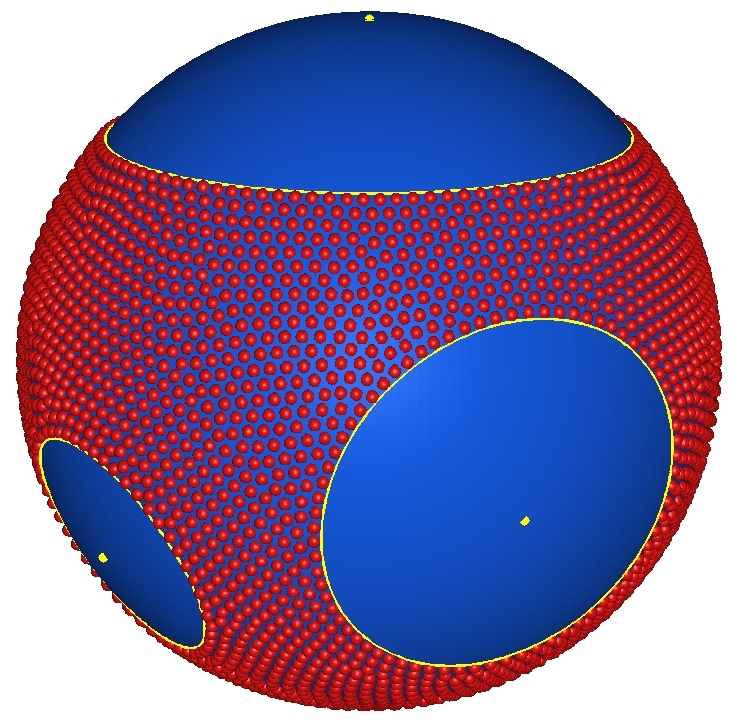} 
\end{center}
\caption{\label{fig2} Approximate log-optimal points for $m=3$, $N=4000$ with
$q_1 = \frac{1}{4}$, $q_2 = \frac{1}{8}$, $q_3 = \frac{1}{20}$, $\PT{a}_1=(0,0,1)$,
$\PT{a}_2 = ( \frac{\sqrt{91}}{10}, 0, -\frac{3}{10} )$, and
$\PT{a}_3 = ( 0, \frac{\sqrt{3}}{2}, -\frac{1}{2} )$}
\end{figure}

\begin{remark}
The objective function for the optimization problem (\ref{Eq})
with the discrete external field (\ref{DiscrField}) is
\[
  E_{Q,N}(\PT{x}_1,\ldots,\PT{x}_N) =
  \sum_{1\leq i \neq j\leq N} k_s(\PT{x}_i,\PT{x}_j) +
  2 (N-1) \sum_{i=1}^m q_i \sum_{j=1}^N k_{s_i}(\PT{a}_i,\PT{x}_j),
\]
where $k(\PT{x},\PT{y})$ is the Riesz kernel defined at he beginning
of Section~1.
The standard spherical parametrisation,
$\PT{x}_i = (\sin(\theta_i)\cos(\phi_i) \; \sin(\theta_i)\sin(\phi_i) \; \cos(\theta_i))^T \in \mathbb{S}^{2}$
for $\theta_i\in[0, \pi]$ and $\phi_i\in [0, 2\pi)$
is used to avoid the non-linear constraints $\PT{x}_i \cdot \PT{x}_i = 1, i = 1,\ldots,N$.
This introduces singularities at the poles $\theta = 0, \pi$,
one of which can be avoided by using the rotational invariance of the
objective function to place the first external field at the north pole.
For $\theta_i \neq 0, \pi$ the gradient of $E_{Q,N}(\theta_1, \phi_1, \ldots, \theta_N, \phi_N)$
can be calculated for use in a nonlinear optimization method.

Point sets $\{\PT{x}_1,\ldots,\PT{x}_N\}$ that provide
approximate optimal s-energy configurations
were obtained using this spherical parametrisation of the points
and applying a nonlinear optimization method,
for example a limited memory BFGS method for
bound constrained problems~\cite{ZhuByrLuNoc1997},
to find a local minimum of $E_{Q,N}$.
The initial point sets $\{\PT{x}_1,\ldots,\PT{x}_N\}$
used as starting points for the nonlinear optimization
were uniformly distributed on $\mathbb{S}^2$,
so did not reflect the structure of the external fields.
A local perturbation of the point set achieving a local minimum was then used
to generate a new starting point and the nonlinear optimization applied again.
The best local minimizer found provided an approximation (upper bound)
on the global minimum of $E_{Q,N}$.
Different local minima arose from the fine structure of the points within their support.
\end{remark}

%
%
%
%
%
%
%
%
%

The results above lend themselves to the following generalization. Given $m$ points $\PT{a}_1,\dots,\PT{a}_m \in \mathbb{S}^2$, for each $i=1,\dots,m$ let $\phi_i$ be a  radially-symmetric  measure centered at $\PT{a}_i$ and supported on $\Sigma_{i,\rho_i}^c$ for some $\rho_i>0$ that has absolutely continuous density with respect to $\sigma_2$; i.e.,
\begin{equation} \label{Phi}
\D \phi_i( \PT{x} ) = f_i(\langle \PT{x}, \PT{a}_i \rangle ) \, \D \sigma_2( \PT{x} ), \qquad f_i (u) = 0 \quad \text{on $\bigg[ -1, \sqrt{1-\rho_i^2/2} \bigg]$.}
\end{equation}
Let $q_i := \| \phi_i \| = \int \D \phi_i( \PT{x} )$, $1 \leq i \leq m$, and define the external field 
\begin{equation}\label{PhiField}
Q_{\boldsymbol{\phi}}( \PT{x} ) := \sum_{i=1}^m U_0^{ \phi_i}(\PT{x}) = \sum_{i=1}^m \int\log \frac{1}{|\PT{x}-\PT{a}_i |} \, \D \phi_i (\PT{x}),
\end{equation}
where $\boldsymbol{\phi} = ( \phi_1, \dots, \phi_m )$.
Then the following theorem holds.
%
\begin{theorem} \label{thm:main---}
Let $d = 2$ and $s = 0$. Let $Q_{\boldsymbol{\phi}}$ be defined by \eqref{PhiField} and $\boldsymbol{\epsilon}$, $\Sigma_{\boldsymbol{\epsilon}}$ be defined by \eqref{epsilon}, \eqref{Sigma_epsilon}. Suppose that $\Sigma_{i,\epsilon_i}^c \cap \Sigma_{j,\epsilon_j}^c = \emptyset$, $1 \leq i < j \leq m$. Then the logarithmic extremal measure associated with $Q_{\boldsymbol{\phi}}$ is $\mu_{Q_{\boldsymbol{\phi}}} = \left( 1 + q \right) {\sigma_2}_{\vert_{\Sigma_{\boldsymbol{\epsilon}}}}$ and the extremal support is $S_{Q_{\boldsymbol{\phi}}} = \Sigma_{\boldsymbol{\epsilon}}$.
\end{theorem}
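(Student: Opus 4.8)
The plan is to run the same three-step scheme as in the proof of Theorem~\ref{thm:main}: produce the signed equilibrium on $\Sigma_{\boldsymbol{\gamma}}$ for $\boldsymbol{\gamma}\le\boldsymbol{\epsilon}$, recognize $\boldsymbol{\gamma}=\boldsymbol{\epsilon}$ as the value at which it becomes a non-negative measure, and close the argument with the characterization in Proposition~\ref{prop:1}(d). The philosophy is that $\phi_i$ must be interchangeable with the point charge $q_i\,\delta_{\PT{a}_i}$ of equal mass, so the first task is a balayage lemma: for $\gamma_i\le\epsilon_i$,
\begin{equation*}
\bal_0(\phi_i,\Sigma_{i,\gamma_i}) \;=\; q_i\,\beta_i \;=\; \bal_0(q_i\,\delta_{\PT{a}_i},\Sigma_{i,\gamma_i}),
\end{equation*}
where $\beta_i$ is normalized arclength on $\partial\Sigma_{i,\gamma_i}$. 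To prove it I would slice $\phi_i$ into uniform parallels (rings about the $\PT{a}_i$-axis): since $\phi_i$ is radially symmetric and supported inside the open cap $\Sigma_{i,\gamma_i}^c$, each ring sweeps under logarithmic balayage onto the boundary circle $\partial\Sigma_{i,\gamma_i}$, the swept measure being rotationally invariant and mass-preserving, hence a multiple of $\beta_i$; integrating over the slicing yields $q_i\beta_i$.

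Granting this lemma, the first two steps are essentially free, because the sources enter the construction of Theorem~\ref{thm:main} only through their balayages. Thus the signed measure
\begin{equation*}
\tau \;=\; (1+q)\,\bal_0(\sigma_2,\Sigma_{\boldsymbol{\gamma}}) - \bal_0\Big(\textstyle\sum_i \phi_i,\;\Sigma_{\boldsymbol{\gamma}}\Big)
\end{equation*}
is literally the same measure as in Theorem~\ref{thm:main}, so $\eta_{\Sigma_{\boldsymbol{\gamma}},Q_{\boldsymbol{\phi}}}=\tau$, and at $\boldsymbol{\gamma}=\boldsymbol{\epsilon}$ it collapses to the non-negative unit measure $\mu := (1+q)\,{\sigma_2}_{\vert_{\Sigma_{\boldsymbol{\epsilon}}}}$. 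On $\Sigma_{\boldsymbol{\epsilon}}$ the weighted potential is constant: by \eqref{LogBal} each $U_0^{\phi_i}$ differs from $U_0^{q_i\beta_i}$ only by an additive constant there, so $U_0^{\mu}+Q_{\boldsymbol{\phi}}\equiv F_{Q_{\boldsymbol{\phi}}}$ on the support, which is \eqref{VarEq4}.

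The genuine difficulty is \eqref{VarEq3} on the charge-free caps $\Sigma_{\boldsymbol{\epsilon}}^c=\bigsqcup_i\Sigma_{i,\epsilon_i}^c$, and this is where I expect the main obstacle. In the point-charge case the inequality was automatic because $Q$ blows up like $q_{i_0}\log(1/|\PT{x}-\PT{a}_{i_0}|)$ at the centre; here $Q_{\boldsymbol{\phi}}$ stays bounded near $\PT{a}_{i_0}$, so positivity of $U_0^\mu + Q_{\boldsymbol\phi} - F_{Q_{\boldsymbol\phi}}$ must be wrung out of the geometry of $\phi_{i_0}$. Writing $\mu = (1+q)\sigma_2 - \sum_i (1+q){\sigma_2}_{\vert_{\Sigma_{i,\epsilon_i}^c}}$ and noting that $(1+q){\sigma_2}_{\vert_{\Sigma_{i,\epsilon_i}^c}}$ has the same mass $q_i$ and the same balayage $q_i\beta_i$ onto $\Sigma_{i,\epsilon_i}$ as $\phi_i$ (so the two differ by an additive constant off the cap), the problem localizes: on $\Sigma_{i_0,\epsilon_{i_0}}^c$ everything reduces to showing that the radial function
\begin{equation*}
\psi_{i_0}(\PT{x}) \;:=\; U_0^{\phi_{i_0}}(\PT{x}) - U_0^{\,(1+q){\sigma_2}_{\vert_{\Sigma_{i_0,\epsilon_{i_0}}^c}}}(\PT{x})
\end{equation*}
attains its minimum on the boundary circle $\partial\Sigma_{i_0,\epsilon_{i_0}}$, where it already equals the constant it takes on all of $\Sigma_{i_0,\epsilon_{i_0}}$.

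To handle this I would parametrize by $\xi=\langle\PT{x},\PT{a}_{i_0}\rangle$ and use the explicit potential of a uniform ring (whose kernel, integrated over the longitude, is elementary). This gives $\psi_{i_0}'(\xi) = -G(\xi)/(1-\xi^2)$, where $G(\xi)=\chi\big(\{\langle\cdot,\PT{a}_{i_0}\rangle>\xi\}\big)$ and $\chi = (1+q){\sigma_2}_{\vert_{\Sigma_{i_0,\epsilon_{i_0}}^c}} - \phi_{i_0}$ is a radial mass-zero measure on the cap. The sought minimum principle is then equivalent to the level-by-level comparison $G(\xi)\le 0$, i.e.\ that $\phi_{i_0}$ carries at least as much mass as the uniform cap density above every parallel — precisely a statement that $\phi_{i_0}$ is sufficiently concentrated toward $\PT{a}_{i_0}$, which is exactly the role of the support hypothesis $f_{i_0}\equiv 0$ near the equatorial edge of the cap; the point-charge computation of Theorem~\ref{thm:main} is recovered as the fully concentrated extreme. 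Verifying this comparison (and the resulting monotonicity of the single-variable integral) is the crux of the argument. Once \eqref{VarEq3} is secured on every cap, Proposition~\ref{prop:1}(d) identifies $\mu$ with $\mu_{Q_{\boldsymbol{\phi}}}$ and $\Sigma_{\boldsymbol{\epsilon}}$ with $S_{Q_{\boldsymbol{\phi}}}$, completing the proof.
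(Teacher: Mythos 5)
Your plan coincides with the paper's own proof on the parts that work: the paper's entire argument is the balayage identity $\bal_0(\phi_i,\Sigma_{i,\epsilon_i})=q_i\,\beta_i$ (your ring-slicing proof of it is fine) plus the assertion that the rest of Theorem~\ref{thm:main} carries over verbatim; your construction of $\tau$, the constancy of the weighted potential on $\Sigma_{\boldsymbol{\epsilon}}$, and the appeal to Proposition~\ref{prop:1}(d) are exactly that. You actually go beyond the paper in one important respect: you observe, correctly, that the inequality \eqref{VarEq3} inside the caps does \emph{not} carry over, because in Theorem~\ref{thm:main} that step rests on the monotonicity of $f$ in \eqref{f-function}, whose term $\frac{q_{i_0}}{2}\log(1-u)$ encodes the logarithmic blow-up of a point source --- which $Q_{\boldsymbol{\phi}}$ does not have. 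Your localization to the radial function $\psi_{i_0}$ and the derivative formula $\psi_{i_0}'(\xi)=-G(\xi)/(1-\xi^2)$ are also correct.

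The gap is your closing claim that the comparison $G\le 0$ ``is exactly the role of the support hypothesis''. It is not. Hypothesis \eqref{Phi} forces $G(\xi)\le 0$ only for parallels $\xi$ between the cap boundary $t_{i_0}$ and the support cap of $\phi_{i_0}$; inside the support it says nothing, and if $f_{i_0}$ is concentrated near the bottom edge of its support (a smeared ring at height $\tau$), then for $\xi$ slightly above $\tau$ one has $\phi_{i_0}(\{\langle\cdot,\PT{a}_{i_0}\rangle>\xi\})\approx 0$ while the uniform comparison mass is positive, so $G(\xi)>0$. Worse, the minimum principle you are trying to prove --- and with it the theorem's conclusion --- genuinely fails for such densities, so no argument can close the gap from the stated hypotheses. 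Writing $\beta_u$ for the normalized uniform measure on the parallel at height $u$ and using $U_0^{\beta_u}(\xi)=-\frac12\log\big[\big(1-\min(u,\xi)\big)\big(1+\max(u,\xi)\big)\big]$, a thin ring of mass $q_{i_0}$ at height $\tau\in(t_{i_0},1)$ gives (take $m=1$; other sources contribute only constants on the cap)
\begin{equation*}
U_0^{\mu}(\PT{a}_{i_0}) + Q_{\boldsymbol{\phi}}(\PT{a}_{i_0}) - F
= \frac{1+q}{4}\left[\int_{t_{i_0}}^{1}\log\frac{1-u}{1+u}\,du
\;-\;\left(1-t_{i_0}\right)\log\frac{1-\tau}{1+\tau}\right],
\end{equation*}
where $\mu=(1+q){\sigma_2}_{\vert_{\Sigma_{\boldsymbol{\epsilon}}}}$ and $F$ is its weighted-potential constant. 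Since $\log\frac{1-u}{1+u}$ is strictly decreasing, this is strictly negative whenever $\tau$ lies below the mean-value point of that function on $[t_{i_0},1]$ (for $t_{i_0}=0$ the threshold is $\tau=3/5$); then \eqref{VarEq1} fails on a set of positive capacity and $\mu$ is not the extremal measure, even though $\phi_{i_0}$ satisfies \eqref{Phi} with $\rho_{i_0}\le\epsilon_{i_0}$. So the correct repair is to add a hypothesis: either your level-by-level comparison $G\le 0$ taken as an assumption (note it is sufficient, not ``equivalent'' --- only $\int_{t_{i_0}}^{\xi}G(v)(1-v^2)^{-1}dv\le 0$ for all $\xi$ is needed), or $\rho_{i_0}$ small enough that the positive accumulation of $-G$ below the support cap dominates the worst case inside it. Be aware that this criticism applies equally to the paper's one-line proof, which silently skips precisely the step you identified as the crux.
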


\begin{proof} The proof proceeds as in the proof of Theorem~\ref{thm:main} with the adaption that the balayage measure of $\phi_i$ is given by
\begin{equation*}
\widehat{\phi}_i = \bal_0(\phi_i,\Sigma_{i,\epsilon_i}) = \| \phi \| \, \beta_i = q_i \, \beta_i,
\end{equation*}
which follows easily from the hypothesis $\rho_i \leq \epsilon_i$ and the uniqueness of balayage measures. 
\end{proof}

\vskip 2mm

We next formulate the analog  of Theorem \ref{thm:main} in the complex plane $\mathbb{C}$. Let us fix one of the charges, say $\PT{a}_m$, at the North Pole $\PT{p}$, which will also serve as the center of the \emph{Kelvin transformation} $\mathcal{K}$ (stereographic projection, or equivalently, inversion about the center $\PT{p}$) with radius $\sqrt{2}$ onto the equatorial plane. 
Set $w_i := \mathcal{K}(\PT{a}_i)$, $1 \leq i \leq m$. The image of $\PT{a}_m$ under the Kelvin transformation is the "point at infinity" in $\mathbb{C}$.
Letting $z = \mathcal{K}(\PT{x})$, $\PT{x} \in \mathbb{S}^2$, we can utilize the following formulas
\begin{equation*}
\left| \PT{x} - \PT{p} \right| = \frac{2}{\sqrt{1+|z|^2}}, \qquad \left| \PT{x} - \PT{a}_i \right| = \frac{2 |z - w_i |}{\sqrt{ 1 + |z|^2 } \, \sqrt{ 1 + |w_i|^2} }, \quad 1 \leq i \leq m-1
\end{equation*}
to convert the continuous minimal energy problem (cf. \eqref{Vq}) and the discrete minimal energy problem (cf. \eqref{Eq}) on the sphere to their analog forms  in the complex plane $\mathbb{C}$.
Neglecting a constant term, we obtain in the complex plane the external field
\begin{equation}\label{DiscrFieldC}
\widetilde{Q}(z):=\sum_{i=1}^{m-1}q_i\log \frac{1}{|z-w_i|}+\left( 1 + q \right)\log \sqrt{1+|z|^2}, \qquad z \in \mathbb{C}. 
\end{equation}
This external field is admissible in the sense of Saff-Totik \cite{SafTot1997}, since
\begin{equation*}
\lim_{|z| \to \infty} \left( \widetilde{Q}(z) - \log |z| \right) = \lim_{|z| \to \infty} q_m \log |z| = \infty.
\end{equation*}
Therefore, there is a unique equilibrium measure $\mu_{\widetilde{Q}}$ characterized by variational inequalities similar to the ones in Proposition~\ref{prop:1}(d). The following theorem giving the extremal support $S_{\widetilde{Q}}$ and the extremal measure $\mu_{\widetilde{Q}}$ associated with the external field $\widetilde{Q}$ in \eqref{DiscrFieldC} for sufficiently small $q_i$'s is a direct consequence of Theorem \ref{thm:main}. 

\begin{theorem} \label{thm:mainC}
Let $w_1, \dots, w_{m-1} \in \mathbb{C}$ be fixed and $q_1, \dots, q_m$ be positive real numbers with $q = q_1 + \cdots + q_m$ and $\widetilde{Q}$ be the corresponding external field given in \eqref{DiscrFieldC}. 
Further, let $\PT{a}_1, \dots, \PT{a}_{m-1} \in \mathbb{S}^2$ be the pre-images under the Kelvin transformation~$\mathcal{K}$, i.e., $w_i = \mathcal{K}( \PT{a}_i )$, $1 \leq i \leq m-1$, and $a_{m} = \PT{p}$. If the $q_i$'s are sufficiently small so that $\Sigma_{i,\epsilon_i}^c \cap \Sigma_{j,\epsilon_j}^c = \emptyset$, $1\leq i <j \leq m$, where the spherical caps $\Sigma_{i,\epsilon_i}$ are defined in \eqref{Sigma_epsilon}, then there are open discs $D_1, \dots, D_{m-1}$ in $\mathbb{C}$ with $w_i \in D_i=\mathcal{K}(\Sigma_{i,\epsilon_i})$, $1 \leq i \leq m-1$, such that 
\begin{equation} \label{EqSuppC}
S_{\widetilde{Q}} = \left\{ z \in \mathbb{C} \, : \, |z| \leq \sqrt{\frac{1+q-q_m}{q_m}} \right\} \setminus \bigcup_{i=1}^{m-1} D_i .
\end{equation}
The extremal measure $\mu_{\widetilde{Q}}$ associated with $  \widetilde{Q}$ is given by
\begin{equation} \label{EqMeC}
\D \mu_{\widetilde{Q}}(z) = \frac{1+q}{\pi \left( 1 + |z|^2 \right)^2} \, \D A(z),
\end{equation}
where $\D A$ denotes the Lebesgue area measure in the complex plane. 
\end{theorem}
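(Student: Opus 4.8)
The plan is to transfer Theorem~\ref{thm:main} from the sphere $\mathbb{S}^2$ to the plane $\mathbb{C}$ via the Kelvin transformation $\mathcal{K}$, using the conformal invariance (up to an additive constant arising from the conformal factor) of the logarithmic energy problem under stereographic projection. The stated formulas for $|\PT{x}-\PT{p}|$ and $|\PT{x}-\PT{a}_i|$ in terms of $z=\mathcal{K}(\PT{x})$ show precisely how the spherical external field $Q$ of Theorem~\ref{thm:main} is carried to the planar field $\widetilde{Q}$ in \eqref{DiscrFieldC}, with the pole charge $q_m$ at $\PT{a}_m=\PT{p}$ becoming the growth term $q_m\log|z|$ at infinity (accounting for the admissibility computation already displayed). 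Since the correspondence of the two variational problems is exact up to a constant, the equilibrium measure and its support transform directly: $S_{\widetilde{Q}}=\mathcal{K}(S_Q)=\mathcal{K}(\Sigma_{\boldsymbol{\epsilon}})$ and $\mu_{\widetilde{Q}}$ is the pushforward $\mathcal{K}_\ast\mu_Q$.

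First I would record the effect of $\mathcal{K}$ on the pieces of $\Sigma_{\boldsymbol{\epsilon}}$. For each $i=1,\dots,m-1$ the spherical cap $\Sigma_{i,\epsilon_i}^c$ around a finite source $\PT{a}_i$ maps under the (conformal, Möbius) Kelvin transform to an open disc $D_i=\mathcal{K}(\Sigma_{i,\epsilon_i}^c)$ (images of spherical caps not through the center $\PT{p}$ are genuine Euclidean discs), and $w_i=\mathcal{K}(\PT{a}_i)\in D_i$; hence $\mathcal{K}(\Sigma_{i,\epsilon_i})$ is the complement of the disc $D_i$. The remaining cap $\Sigma_{m,\epsilon_m}^c$ is centered at $\PT{p}$ itself, so it maps to the exterior of a disc about infinity, i.e. to the set $\{|z|>R\}$; computing its radius from $\epsilon_m^2=4q_m/(1+q)$ together with the cylindrical-coordinate relation $|\PT{x}-\PT{p}|^2=2(1-u)$ and the projection formula $|\PT{x}-\PT{p}|=2/\sqrt{1+|z|^2}$ yields the boundary value $|z|=\sqrt{(1+q-q_m)/q_m}$. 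Intersecting over all $i$ gives exactly the right-hand side of \eqref{EqSuppC}: the big disc $\{|z|\le\sqrt{(1+q-q_m)/q_m}\}$ with the holes $D_1,\dots,D_{m-1}$ removed.

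Next I would obtain the density \eqref{EqMeC} by pushing $\mu_Q=(1+q)\,{\sigma_2}_{\vert_{\Sigma_{\boldsymbol{\epsilon}}}}$ forward through $\mathcal{K}$. This is a change-of-variables computation: the pullback of the Euclidean area element $\D A(z)$ under stereographic projection is the standard conformal factor, so $\D\sigma_2$ corresponds to $\frac{1}{\pi(1+|z|^2)^2}\,\D A(z)$ (the normalization being fixed by total mass $1$ over $\mathbb{C}$), and multiplying by the constant $(1+q)$ and restricting to $S_{\widetilde{Q}}$ gives \eqref{EqMeC}. I would then invoke the plane analog of Proposition~\ref{prop:1}(d) (the Frostman characterization in the Saff--Totik setting, which the text already cites and whose applicability follows from the admissibility of $\widetilde{Q}$ verified above) to confirm that this transported measure is indeed the unique equilibrium measure $\mu_{\widetilde{Q}}$: the variational (in)equalities for $\mu_Q$ on $\mathbb{S}^2$ established in Theorem~\ref{thm:main} translate, under the constant-shifted correspondence of weighted potentials, into the variational (in)equalities for $\mu_{\widetilde{Q}}$ on $\mathbb{C}$.

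The main obstacle I expect is bookkeeping the additive and multiplicative constants in the potential correspondence carefully enough that the variational characterization transfers cleanly. Under the Kelvin transform, $\log(1/|\PT{x}-\PT{y}|)$ does not map to $\log(1/|z-w|)$ exactly but acquires terms involving $\log\sqrt{1+|z|^2}$ and $\log\sqrt{1+|w|^2}$; these are precisely the terms that assemble into the growth term of $\widetilde{Q}$ and into state-independent constants, and one must verify that the measure-dependent pieces reorganize so that the weighted potential $U_0^{\mu_{\widetilde{Q}}}+\widetilde{Q}$ is constant on $S_{\widetilde{Q}}$ and no smaller off it. Because Theorem~\ref{thm:main} already supplies the corresponding equality and inequality on the sphere, this reduces to checking that the conformal-factor contributions are genuinely constant in $z$ on the relevant sets (so they do not disturb the equality/inequality structure) — a routine but careful verification rather than a new difficulty.
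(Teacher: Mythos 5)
Your proposal is correct and follows essentially the same route as the paper: apply the Kelvin transformation to the weighted potential so that $U_0^{\mu_Q}(\PT{x})+Q(\PT{x})=U_0^{\mu_{\widetilde{Q}}}(z)+\widetilde{Q}(z)+\mathrm{const}$, push $\mu_Q=(1+q)\,{\sigma_2}_{\vert_{\Sigma_{\boldsymbol{\epsilon}}}}$ forward via the change of variables $\D\sigma_2(\PT{x})=\D A(z)/[\pi(1+|z|^2)^2]$, transfer the variational (in)equalities \eqref{SignedVarEq1}--\eqref{SignedVarEq2} to the plane, and conclude by the Saff--Totik characterization. The extra details you supply (the disc images of the caps, the radius $\sqrt{(1+q-q_m)/q_m}$, and the constant-bookkeeping) are exactly the steps the paper compresses into ``a straightforward application of the Kelvin transformation.''
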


\begin{proof} The proof follows by a straight forward application of the Kelvin transformation to the weighted potential $U_0^{\mu_Q} (\PT{x}) +Q(\PT{x})$ and using the identity relating the regular (not normalized) Lebesgue measure on the sphere and the area measure on the complex plane
\begin{equation*}
\frac{4\pi}{\left| \PT{x} - \PT{p} \right|^2} \, \D \sigma_2( \PT{x} ) = \frac{1}{\left| z - \PT{p} \right|^2} \, \D A(z)
\end{equation*}
This change of variables yields the identity
\begin{equation*}
U_0^{\mu_Q} (\PT{x}) +Q(\PT{x})=U_0^{\mu_{\widetilde{Q}}} (z) + \widetilde{Q}(z) + const
\end{equation*}
from which, utilizing \eqref{SignedVarEq1} and \eqref{SignedVarEq2}, one derives 
\begin{align}
U_0^{\mu_{\widetilde{Q}}}(z) + \widetilde{Q}(z) &\geq C \quad \text{in $\mathbb{C}$,} \label{VarEqC1} \\
U_0^{\mu_{\widetilde{Q}}}(z) + \widetilde{Q}(z) &= C \quad \text{on $S_{\widetilde{Q}}$,} \label{VarEqC2}
\end{align}
which implies that $\mu_{\widetilde{Q}}$ is the equilibrium measure by \cite[Theorem 1.3]{SafTot1997}.
 \end{proof}


\begin{remark}
At first it seems like a surprising fact that the equilibrium measure in Theorem \ref{thm:main---} is uniform on $S_Q$ (i.e. has constant density). However, this can be easily seen alternatively from the planar version Theorem \ref{thm:mainC}. Once we derive that the support $ S_{\widetilde{Q}} $
is given by \eqref{EqSuppC}, we can recover the measure $ \mu_{\widetilde{Q}}$ by applying
Gauss'  theorem (cf. \cite[Theorem II.1.3]{SafTot1997}), namely on any subregion of $ S_{\widetilde{Q}} $ we have 
\[d\mu_{\widetilde{Q}} = - \frac{1}{2\pi} \Delta U^{\mu_{\widetilde{Q}}} dA(z) =  \frac{1}{2\pi} \Delta \widetilde{Q}(z) =\frac{ 1+q}{\pi(1 +|z|^2)^2} \, dA(z).\]
Recall that on this subregion $\log |z-w_i|$ is harmonic for all $i=1,\dots,m-1$. As $d\sigma_2 (x)= dA(z)/[\pi(1 +|z|^2)^2] $, 
we get that $\mu_Q$ is the normalized Lebesgue surface measure on $S_Q$. Observe, that the same argument applies to the setting of Theorem \ref{RegionInfluence} ($d=2,\ s=0$), from which we derive $\mu_Q = \left( 1 + q \right) {\sigma_2}_{\vert_{S_Q}}$ even in the case when $\Sigma_{\epsilon_i}^c$ are not disjoint. Of course, we don't know the equilibrium support $S_Q$ in this case. For related results see \cite{Bel2013, Bel2015}.
%
\end{remark}

\section{Riesz $(d-2)$-Energy Interactions on $\mathbb{S}^d$, $d \geq 3$} 
\label{Rie}

The case of $(d-2)$-energy interactions on $\mathbb{S}^d$, $d\geq 3$, and an external field $Q$ given by~\eqref{DiscrField} is considerably more involved as the balayage measures utilized to determine the signed equilibrium on $\Sigma_{\boldsymbol{\gamma}}$ diminish their masses. This phenomenon yields an implicit nonlinear system for the critical values of the radii $\epsilon_1, \dots, \epsilon_m$ (see \eqref{eq:gamma.i.cond.} and \eqref{EquationC}) characterizing the regions of electrostatic influence. 
%

Let $d \geq 3$ and $0 < d - 2 \leq s < d$. Let $\Phi_s(t_i) := \mathcal{F}_s( \Sigma_{i, \gamma_i} )$ be the Mhaskar-Saff $\mathcal{F}$-functional associated with the external field $Q_i (\PT{x}) := q_i \, |\PT{x} - \PT{a}_i |^{-s}$ evaluated for the spherical cap $\Sigma_{i, \gamma_i}$. Then the signed $s$-equilibrium measure $\eta_{i,s} := \eta_{\Sigma_{i, \gamma_i},Q_i,s}$ on $\Sigma_{i, \gamma_i}$ associated with $Q_i$ is given by (see \cite[Theorem 11 and 15]{BraDraSaf2009})
%
%
%
\begin{equation} \label{eta_{t_i}}
\eta_{i,s} = \frac{\Phi_s(t_i)}{W_s( \mathbb{S}^d ) } \, \bal_s(\sigma_d, \Sigma_{i,\gamma_i}) - q_i \, \bal_s ( \delta_{\PT{a}_i}, \Sigma_{i,\gamma_i}). 
\end{equation}
For $d-2<s<d$ this signed measure is absolutely continuous 
%
\begin{equation*}
\D \eta_{i,s}( \PT{x} ) = \frac{\omega_{d-1}}{\omega_{d}} \eta_{i,s}^{\prime}(u) \left( 1 - u^2 \right)^{d/2-1} \, \dd u \, \dd\sigma_{d-1}(\overline{\PT{x}}), \quad \PT{x} = ( \sqrt{1-u^2} \, \overline{\PT{x}}, u) \in \Sigma_{i,\gamma_i},
\end{equation*}
with density function 
\begin{equation}
\begin{split} \label{eta.t.prime.1st.result}
\eta_{i,s}^{\prime}(u) 
&= \frac{1}{W_s(\mathbb{S}^d)} \frac{\gammafcn(d/2)}{\gammafcn(d-s/2)}
\left( \frac{1-t_i}{1-u} \right)^{d/2} \left( \frac{t_i-u}{1-t_i} \right)^{(s-d)/2} \\
&\phantom{=\times}\times \Bigg\{ \Phi_s(t_i) \, \HypergeomReg{2}{1}{1,d/2}{1-(d-s)/2}{\frac{t_i-u}{1-u}} - \frac{q_i \, 2^{d-s}}{{\gamma_i}^{d}} \Bigg\}. 
\end{split}
\end{equation}
For the ratio $\frac{\omega_{d-1}}{\omega_d}$ see~\eqref{omega.ratio}, a formula for the Riesz $s$-energy $W_0( \mathbb{S}^d )$ is given in \eqref{SphereEnergy}, and $_2\HyperTildeF_1$ denotes Olver's regularized $_2F_1$-hypergeometric function \cite[Eq.~15.2.2]{NIST:DLMF}.
For $s = d - 2$ the signed $(d-2)$-equilibrium
\begin{equation} \label{etabar}
\eta_{i,d-2}= \frac{\Phi_{d-2}(t_i)}{W_{d-2}(\mathbb{S}^d)}  {\sigma_{d}}_{\vert_{\Sigma_{i,\gamma_i}}}
+ \frac{1-t_i}{2} \left( 1 - t_i^2 \right)^{d/2-1} \left[ \Phi_{d-2}(t_i) - \frac{4q_i}{\gamma_i^d} \right] \beta_{i}
\end{equation}
has, like in the logarithmic case (see \eqref{eq:eta.simple}), a boundary-supported component $\beta_i$, which is the normalized Lebesgue measure on the boundary circle of $\Sigma_{i,\gamma}$.
%
Observe that in either case the signed equilibrium has a negative component if and only if 
\begin{equation} \label{eq:negativity.relation}
\Phi_s(t_i) - \frac{2^{d-s} q_i}{\gamma_i^d} < 0, \qquad \text{where $2( 1 - t_i ) = \gamma_i^2$.}
\end{equation}
%
The weighted $s$-potential of $\eta_{i,s}$, $d-2 < s < d$, satisfies (\cite[Theorem~11]{BraDraSaf2009})
\begin{align}
U_s^{\eta_{i,s}}(\PT{z})+Q_i (\PT{z}) &= \Phi_s(t_i), \qquad \PT{z} \in \Sigma_{i,\gamma_i},  \label{SignedPot1}\\
\begin{split}
U_s^{\eta_{i,s}}(\PT{z})+Q_i (\PT{z}) &= \Phi_s(t_i) + \frac{q_i }{[2(1-\xi_i )]^{s/2}} \, \IncompleteBetaRegularized\Bigg(\frac{2}{1-t_i} \frac{\xi_i-t_i}{1+\xi_i}; \frac{d-s}{2}, \frac{s}{2} \Bigg) \\
&\phantom{=\pm}- \Phi_s(t_i) \, \IncompleteBetaRegularized\Bigg(\frac{\xi_i-t_i}{1+\xi_i};
\frac{d-s}{2}, \frac{s}{2}\Bigg), \qquad \PT{z} \in \mathbb{S}^d \setminus \Sigma_{i,\gamma_i}, \label{eq:weighted.outside}
\end{split}
\end{align}
where $\PT{z} = ( \sqrt{1-\xi_i^2}\; \overline{\PT{z}},\xi_i)\in \mathbb{S}^d$, $-1 \leq \xi_i \leq 1$ and $\overline{\PT{z}} \in \mathbb{S}^{d-1}$, and 
\begin{equation*}
\IncompleteBetaRegularized(x;a,b) := \frac{\betafcn(x;a,b)}{\betafcn(a,b)}, \quad \betafcn(a,b) := \betafcn(1;a,b), \quad \betafcn(x;a,b) := \int_0^x u^{a-1} (1-u)^{b-1}\, du
\end{equation*}
are the regularized incomplete beta function, the beta function, and the incomplete beta function \cite[Ch.~5 and 8]{NIST:DLMF}; whereas (\cite[Lemmas~33 and 36]{BraDraSaf2009})
\begin{align}
U_{d-2}^{\eta_{i,d-2}}(\PT{z}) + Q_i(\PT{z}) &= \Phi_{d-2}(t_i), \qquad \PT{z} \in \Sigma_{i,\gamma_i},  \label{SignedPot1special}\\
\begin{split}
U_{d-2}^{\eta_{i,d-2}}(\PT{z}) + Q_i(\PT{z}) &= \Phi_{d-2}(t_i) \left( \frac{1+t_i}{1+\xi_i} \right)^{d/2-1} + \frac{q_i}{\left( 2 ( 1 - \xi_i ) \right)^{d/2-1}} \\
&\phantom{=\pm}- \frac{q_i}{\gamma_i^{d-2}} \left( \frac{1+t_i}{1+\xi_i} \right)^{d/2-1}, \qquad \PT{z} \in \mathbb{S}^d \setminus \Sigma_{i,\gamma_i}. \label{eq:weighted.outside.special}
\end{split}
\end{align}
The last relation follow from \eqref{eq:weighted.outside} if $s$ is changed to $d-2$.

In the proof of our main result for $s=d-2$, $d\geq 3$, we need the analog of \eqref{SignedVarEq1}, which we derive from a similar result for the weighted potential \eqref{eq:weighted.outside}. As this is of independent interest, we state and prove the following lemma for $d-2 \leq s < d$.

\begin{lemma} \label{Lem} 
Let $d \geq 3$ and $d - 2 \leq s < d$. If \eqref{eq:negativity.relation} is satisfied, then the weighted $s$-potential of the signed $s$-equilibrium  $\eta_{i,s}$ satisfies the variational inequalities
\begin{align}
U_s^{\eta_{i,s}}(\PT{z}) + Q_i(\PT{z}) &= \Phi_s(t_i), \qquad \PT{z} \in \Sigma_{i,\gamma_i}, \label{Signed1} \\
U_s^{\eta_{i,s}}(\PT{z}) + Q_i(\PT{z}) &> \Phi_s(t_i),\qquad \PT{z} \in \mathbb{S}^d \setminus \Sigma_{i,\gamma_i}. \label{Signed2}
\end{align}
Furthermore, both relations remain valid if equality is allowed in \eqref{eq:negativity.relation}.
\end{lemma}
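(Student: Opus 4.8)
The plan is to take the interior identity \eqref{Signed1} for granted, since it is exactly \eqref{SignedPot1} (and, for $s=d-2$, \eqref{SignedPot1special}), and to spend all the effort on the strict inequality \eqref{Signed2} on the open cap $\mathbb{S}^d\setminus\Sigma_{i,\gamma_i}$. In the cylindrical coordinate $\xi_i=\langle \PT{z},\PT{a}_i\rangle$ this cap is the interval $t_i<\xi_i<1$ (recall $2(1-t_i)=\gamma_i^2$). Because \eqref{eq:weighted.outside.special} is just the substitution $s\to d-2$ in \eqref{eq:weighted.outside}, I would run a single computation based on \eqref{eq:weighted.outside}, valid uniformly for $d-2\le s<d$. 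Writing $a=(d-s)/2$ and $b=s/2$, set $g(\xi_i):=U_s^{\eta_{i,s}}(\PT{z})+Q_i(\PT{z})-\Phi_s(t_i)$, which by \eqref{eq:weighted.outside} equals
\[
g(\xi_i)=\frac{q_i}{[2(1-\xi_i)]^{b}}\,\IncompleteBetaRegularized(x_1;a,b)-\Phi_s(t_i)\,\IncompleteBetaRegularized(x_2;a,b),
\]
where $x_2=\tfrac{\xi_i-t_i}{1+\xi_i}$ and $x_1=\tfrac{2}{1-t_i}\,x_2$. Continuity of the potentials at $\partial\Sigma_{i,\gamma_i}$ gives $g(t_i)=0$, so it suffices to prove $g(\xi_i)>0$ on $(t_i,1)$.

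Before differentiating, I would dispose of the case $\Phi_s(t_i)\le 0$: then $-\Phi_s(t_i)\,\IncompleteBetaRegularized(x_2;a,b)\ge 0$, so $g(\xi_i)\ge \tfrac{q_i}{[2(1-\xi_i)]^{b}}\IncompleteBetaRegularized(x_1;a,b)>0$ on $(t_i,1)$, and \eqref{eq:negativity.relation} holds automatically. So assume $\Phi_s(t_i)>0$. Differentiating, with $\tfrac{d}{dx}\IncompleteBetaRegularized(x;a,b)=x^{a-1}(1-x)^{b-1}/\betafcn(a,b)$, I would write $g'(\xi_i)=T_1+T_2-T_3$ where
\[
T_1=\frac{2b\,q_i}{[2(1-\xi_i)]^{b+1}}\,\IncompleteBetaRegularized(x_1;a,b),\quad T_2=\frac{q_i}{[2(1-\xi_i)]^{b}}\,\frac{x_1^{a-1}(1-x_1)^{b-1}}{\betafcn(a,b)}\,x_1',\quad T_3=\Phi_s(t_i)\,\frac{x_2^{a-1}(1-x_2)^{b-1}}{\betafcn(a,b)}\,x_2',
\]
all three strictly positive on $(t_i,1)$. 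Thus it is enough to show $T_2>T_3$.

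The heart of the argument, and the step I expect to be the main obstacle, is the clean evaluation of $T_2/T_3$. From $x_1=\tfrac{2}{1-t_i}x_2$ one gets $x_1'/x_2'=x_1/x_2=\tfrac{2}{1-t_i}$; and a short computation yields the two identities $1-x_2=\tfrac{1+t_i}{1+\xi_i}$ and $1-x_1=\tfrac{(1+t_i)(1-\xi_i)}{(1-t_i)(1+\xi_i)}$, hence $\tfrac{1-x_2}{1-x_1}=\tfrac{1-t_i}{1-\xi_i}$ and $[2(1-\xi_i)]^{-b}=[2(1-t_i)]^{-b}\big(\tfrac{1-t_i}{1-\xi_i}\big)^{b}$. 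Substituting these, every power of $\tfrac{1-t_i}{1-\xi_i}$ collapses to the first power and one lands on
\[
\frac{T_2}{T_3}=\frac{q_i\,2^{d-s}}{\Phi_s(t_i)\,\gamma_i^{\,d}}\cdot\frac{1-t_i}{1-\xi_i},
\]
where I used $2^{a-b}/(1-t_i)^{a+b}=2^{d-s}/\gamma_i^{\,d}$ (from $a+b=d/2$, $a-b=d/2-s$, and $\gamma_i^{\,d}=(2(1-t_i))^{d/2}$). The negativity hypothesis \eqref{eq:negativity.relation} reads $q_i2^{d-s}/\gamma_i^{\,d}\ge \Phi_s(t_i)$, so the first factor is $\ge 1$, while $\tfrac{1-t_i}{1-\xi_i}>1$ for $\xi_i\in(t_i,1)$; therefore $T_2>T_3$ and $g'(\xi_i)>T_1>0$.

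Consequently $g$ is continuous on $[t_i,1)$ with $g'(\xi_i)>0$ on $(t_i,1)$, so $g$ is strictly increasing and $g(t_i)=0$ forces $g(\xi_i)>0$ on $(t_i,1)$, which is \eqref{Signed2}; at $\xi_i\to1$ the term $Q_i$ blows up, consistent with $g\to+\infty$. Since only the inequality $q_i2^{d-s}/\gamma_i^{\,d}\ge \Phi_s(t_i)$ was used, the conclusion persists when equality holds in \eqref{eq:negativity.relation}, because the strict factor $\tfrac{1-t_i}{1-\xi_i}>1$ still gives $T_2>T_3$ for $\xi_i\in(t_i,1)$; this yields the final clause of the lemma. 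The genuine difficulty is purely algebraic—producing the identities for $1-x_1$ and $1-x_2$ and recognizing the constant $2^{d-s}/\gamma_i^{\,d}$—after which the monotonicity and the exact role of \eqref{eq:negativity.relation} are immediate, and the case $s=d-2$ (where $a=1$ and the incomplete beta factors are elementary) is included without change.
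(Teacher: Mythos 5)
Your proof is correct, and it takes a genuinely different route from the paper's. Both arguments start from the same closed form \eqref{eq:weighted.outside} for the weighted potential outside the cap (with \eqref{eq:weighted.outside.special} subsumed as the case $a=1$), and both use the hypothesis only through the inequality $2^{d-s}q_i/\gamma_i^{d}\ge \Phi_s(t_i)$, which is why the equality case comes for free in either treatment. The paper, however, argues statically: after a change of variables it rewrites both incomplete beta integrals over the common interval $\bigl[0,\tfrac{\xi_i-t_i}{1+\xi_i}\bigr]$ and bounds one integrand by the other pointwise, which forces a case distinction according to the sign of $\tfrac{s}{2}-1$ (immediate for $\tfrac{s}{2}\le 1$, and requiring an extra chain of inequalities involving $\tfrac{1-t_i}{1-\xi_i}$ for $\tfrac{s}{2}>1$). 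Your argument is dynamic: you differentiate $g$ in $\xi_i$ and reduce everything to the single exact identity
\begin{equation*}
\frac{T_2}{T_3}=\frac{2^{d-s}q_i}{\Phi_s(t_i)\,\gamma_i^{d}}\cdot\frac{1-t_i}{1-\xi_i},
\end{equation*}
which I have verified (using $1-x_2=\tfrac{1+t_i}{1+\xi_i}$, $1-x_1=\tfrac{(1+t_i)(1-\xi_i)}{(1-t_i)(1+\xi_i)}$, $a+b=d/2$, $a-b=d/2-s$, $\gamma_i^{d}=[2(1-t_i)]^{d/2}$). This avoids the case split on $s$ entirely and proves something slightly stronger than \eqref{Signed2}: the weighted potential is strictly increasing in $\xi_i$ on $(t_i,1)$, i.e., it increases monotonically from its boundary value $\Phi_s(t_i)$ toward the singularity at $\PT{a}_i$, a fact consistent with (and sharpening) the boundary asymptotics the paper writes down later in the proof of Theorem~\ref{thm:main2}. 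Two cosmetic remarks: your preliminary case $\Phi_s(t_i)\le 0$ is vacuous, since $\Phi_s(t_i)=W_s(\Sigma_{i,\gamma_i})+\int Q_i\,\dd\mu_{\Sigma_{i,\gamma_i}}>0$ whenever $s\ge d-2\ge 1$; and since $T_1>0$ relies on $b=s/2>0$, the cleanest phrasing of the final step is $g'=T_1+(T_2-T_3)>0$ with $T_1\ge 0$ and $T_2>T_3$.
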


\begin{proof} The first equality \eqref{Signed1} was established in \cite[Theorems 11 and 15]{BraDraSaf2009}. 

Let $d \geq 3$ and $d - 2 \leq s < d$. The right-hand side of \eqref{eq:weighted.outside} is a function of $\xi_i$ with $t_i < \xi_i \leq 1$. We denote it by $G( \xi_i )$. Using the integral form of the incomplete regularized beta function, we get 
\begin{equation*}
\begin{split}
\betafcn\Big( \frac{d-s}{2}, \frac{s}{2} \Big) \, \Big( G( \xi_i ) - \Phi_s( t_i ) \Big) 
&= \left( \frac{1-t_i}{1-\xi_i} \right)^{s/2} \frac{2^{d-s} q_i}{\gamma_i^{d}} \int_0^{\frac{\xi_i-t_i}{1+\xi_i}} u^{\frac{d-s}{2}-1} \left( 1 - \frac{2 \, u}{1-t_i} \right)^{\frac{s}{2} - 1} \dd u \\
&\phantom{=}- \Phi_s( t_i ) \int_0^{\frac{\xi_i-t_i}{1+\xi_i}} u^{\frac{d-s}{2}-1} \left( 1 - u \right)^{\frac{s}{2} - 1} \dd u.
\end{split}
\end{equation*}
Let \eqref{eq:negativity.relation} be satisfied. Then
\begin{equation*}
\begin{split}
\betafcn\Big( \frac{d-s}{2}, \frac{s}{2} \Big) \, \frac{G( \xi_i ) - \Phi_s( t_i )}{\Phi_s( t_i )} 
&> \left[ \frac{1-t_i}{1-\xi_i} \right]^{s/2} \int_0^{\frac{\xi_i-t_i}{1+\xi_i}} u^{\frac{d-s}{2}-1} \left( 1 - \frac{2}{1-t_i} \, u \right)^{\frac{s}{2} - 1} \dd u \\
&\phantom{=}- \int_0^{\frac{\xi_i-t_i}{1+\xi_i}} u^{\frac{d-s}{2}-1} \left( 1 - u \right)^{\frac{s}{2} - 1} \dd u.
\end{split}
\end{equation*}
The square-bracketed expression is $>1$ for $-1 < t_i < \xi_i \leq 1$. Since ${\frac{2}{1-t_i} > 1}$, the first integrand is bounded from below by the second integrand if $\frac{s}{2} - 1 \leq 0$. In the case $\frac{s}{2} - 1 > 0$, we observe that for $0 \leq u \leq \frac{\xi_i - t_i}{1+\xi_i}$, 
\begin{equation*}
\begin{split}
\left[ \frac{1-t_i}{1-\xi_i} \right]^{s/2} \left( 1 - \frac{2}{1-t_i} \, u \right)^{\frac{s}{2} - 1} 
&= \frac{1-t_i}{1-\xi_i} \left( \frac{1-t_i}{1-\xi_i} - \frac{2}{1-\xi_i} \, u \right)^{\frac{s}{2} - 1} \\
&> \left( \frac{1-t_i}{1-\xi_i} - \frac{2}{1-\xi_i} \, u \right)^{\frac{s}{2} - 1} \\
&\geq \left( 1 - u \right)^{\frac{s}{2} - 1}.
\end{split}
\end{equation*}
The estimates are strict in both cases. Hence, equality is allowed in \eqref{eq:negativity.relation}.
\end{proof}

We are now ready to state and prove the second main result.

\begin{theorem} \label{thm:main2}
Let $d\geq 3$ and $s = d - 2$. Let $Q$ be defined by \eqref{DiscrField}.
Suppose the positive charges $q_1, \dots, q_m$ are sufficiently small. Then there exists a critical $\boldsymbol{\epsilon} = ( \epsilon_1, \dots, \epsilon_m )$, uniquely defined by these charges, such that $\Sigma_{\epsilon_i}^c \cap \Sigma_{\epsilon_j}^c = \emptyset$, $1\leq i <j \leq m$, and
the $(d-2)$-extremal measure associated with $Q$ is $\mu_Q = C \, {\sigma_d}_{\vert_{\Sigma_{\boldsymbol{\epsilon}}}}$ for a uniquely defined normalization constant $C > 1$ and the extremal support is $S_Q = \Sigma_{\boldsymbol{\epsilon}}$.

Furthermore, an optimal $(d-2)$-energy $N$-point configuration w.r.t. $Q$ is contained in $S_Q$ for every $N \geq 2$. 
\end{theorem}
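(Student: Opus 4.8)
My plan is to follow the architecture of the proof of Theorem~\ref{thm:main}, replacing the mass–preserving identity \eqref{eq:eta.simple} by its mass–losing counterpart \eqref{etabar} and the elementary monotonicity of \eqref{f-function} by Lemma~\ref{Lem}. The one new structural input I would isolate first is a \emph{boundary–concentration} principle special to the critical index. Since \eqref{etabar} is valid for \emph{every} $q_i>0$ and its absolutely continuous part is the \emph{uniform} measure $(\Phi_{d-2}(t_i)/W_{d-2}(\mathbb{S}^d))\,{\sigma_d}_{\vert_{\Sigma_{i,\gamma_i}}}$, while $\Phi_{d-2}(t_i)=\mathcal{F}_{d-2}(\Sigma_{i,\gamma_i})$ is affine in $q_i$, comparing the coefficients of $q_i^{0}$ and $q_i^{1}$ forces the absolutely continuous parts of both $\bal_{d-2}({\sigma_d}_{\vert_{\Sigma_{i,\gamma_i}^c}},\Sigma_{i,\gamma_i})$ and $\bal_{d-2}(\delta_{\PT{a}_i},\Sigma_{i,\gamma_i})$ to vanish. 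Thus, exactly as in \eqref{BalLog}, these balayages are carried by the bounding circle, $\bal_{d-2}(\delta_{\PT{a}_i},\Sigma_{i,\gamma_i})=c_{\delta,i}\,\beta_i$ and $\bal_{d-2}({\sigma_d}_{\vert_{\Sigma_{i,\gamma_i}^c}},\Sigma_{i,\gamma_i})=c_{\sigma,i}\,\beta_i$, with $0<c_{\sigma,i},c_{\delta,i}<1$ (the strict inequalities being precisely the loss of mass).

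Granting this, for $\boldsymbol{\gamma}\le\boldsymbol{\epsilon}$ I would form $\eta_{\boldsymbol{\gamma}}\DEF A\,\bal_{d-2}(\sigma_d,\Sigma_{\boldsymbol{\gamma}})-\sum_i q_i\,\bal_{d-2}(\delta_{\PT{a}_i},\Sigma_{\boldsymbol{\gamma}})$, with $A$ chosen so that $\|\eta_{\boldsymbol{\gamma}}\|=1$. By the defining property \eqref{sBal}, its weighted potential is the constant $A\,W_{d-2}(\mathbb{S}^d)$ on $\Sigma_{\boldsymbol{\gamma}}$, so $\eta_{\boldsymbol{\gamma}}$ is \emph{the} signed equilibrium there. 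The non–overlap hypothesis gives $\partial\Sigma_{i,\gamma_i}\subset\Sigma_{\boldsymbol{\gamma}}$, so balayage in steps \eqref{eq:balayage.in.steps} together with the partition \eqref{eq:balayage.partition} and the boundary–concentration above collapse every swept complementary mass onto $\bigcup_i\partial\Sigma_{i,\gamma_i}$, yielding the clean split $\eta_{\boldsymbol{\gamma}}=A\,{\sigma_d}_{\vert_{\Sigma_{\boldsymbol{\gamma}}}}+\sum_i B_i\,\beta_i$ with $B_i\DEF A\,c_{\sigma,i}-q_i\,c_{\delta,i}$. Hence the absolutely continuous part is automatically uniform, and I would \emph{define} $\boldsymbol{\epsilon}$ by the vanishing of all boundary masses $B_i=0$, coupled to the normalization $A=C\DEF 1/\sigma_d(\Sigma_{\boldsymbol{\epsilon}})$; this is the implicit system announced in Section~\ref{Rie}, and $C>1$ because $\Sigma_{\boldsymbol{\epsilon}}\subsetneq\mathbb{S}^d$.

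For the variational inequalities at $\boldsymbol{\epsilon}$ the decomposition pays off through a \emph{localization}. For $\PT{x}$ in the complement one writes $U_{d-2}^{\eta_{\boldsymbol{\gamma}}}(\PT{x})+Q(\PT{x})=A\,W_{d-2}(\mathbb{S}^d)+\sum_j h_j(\PT{x})$, where $h_j$ depends only on $\langle\PT{x},\PT{a}_j\rangle$ and, by the matchings $q_j|\PT{x}-\PT{a}_j|^{-(d-2)}=c_{\delta,j}U_{d-2}^{\beta_j}$ and $U_{d-2}^{{\sigma_d}_{\vert_{\Sigma_{j,\gamma_j}^c}}}=c_{\sigma,j}U_{d-2}^{\beta_j}$ on $\Sigma_{j,\gamma_j}$, vanishes identically there. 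Since the caps are disjoint, a point of $\Sigma_{\boldsymbol{\gamma}}^c$ lies in a single $\Sigma_{i_0,\gamma_{i_0}}^c$ and inside every other cap, so all terms but $h_{i_0}$ drop out. Recognizing $A\,W_{d-2}(\mathbb{S}^d)+h_{i_0}$ as the weighted potential of $A\,\bal_{d-2}(\sigma_d,\Sigma_{i_0,\gamma_{i_0}})-q_{i_0}\bal_{d-2}(\delta_{\PT{a}_{i_0}},\Sigma_{i_0,\gamma_{i_0}})$—that is, formula \eqref{eq:weighted.outside.special} with $\Phi_{d-2}(t_{i_0})$ replaced by $A\,W_{d-2}(\mathbb{S}^d)$ (by linearity in the $\sigma_d$–balayage coefficient)—I would invoke Lemma~\ref{Lem}, whose proof uses \emph{only} the negativity relation \eqref{eq:negativity.relation}, here in the form $A\,W_{d-2}(\mathbb{S}^d)\le 4q_{i_0}/\gamma_{i_0}^{\,d}$, i.e.\ $B_{i_0}\le0$. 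In the equality case $B_{i_0}=0$ this delivers $h_{i_0}>0$ on the open cap complement. Thus $\eta_{\boldsymbol{\epsilon}}=C\,{\sigma_d}_{\vert_{\Sigma_{\boldsymbol{\epsilon}}}}\ge0$ satisfies $U_{d-2}^{\eta_{\boldsymbol{\epsilon}}}+Q=F_Q$ on $\Sigma_{\boldsymbol{\epsilon}}$ and $>F_Q$ off it, with $F_Q=C\,W_{d-2}(\mathbb{S}^d)$; Proposition~\ref{prop:1}(d) identifies $\mu_Q=\eta_{\boldsymbol{\epsilon}}$ and $S_Q=\Sigma_{\boldsymbol{\epsilon}}$, and the \emph{strict} inequality gives $\widetilde S_Q=\{U_{d-2}^{\mu_Q}+Q\le F_Q\}=S_Q$, so \cite[Corollary~13]{BraDraSaf2014} confines the optimal $N$–point configurations to $S_Q$ exactly as in Corollary~\ref{Cor}.

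The genuine difficulty, and the whole source of novelty over the logarithmic case, is the loss of mass. In Theorem~\ref{thm:main} mass preservation makes $c_{\delta,i}\equiv1$, $c_{\sigma,i}=\sigma_2(\Sigma_{i,\gamma_i}^c)$ and $A\equiv1+q$, so $B_i=0$ decouples into the explicit $\epsilon_i=2\sqrt{q_i/(1+q)}$, whereas here $c_{\sigma,i},c_{\delta,i}$ are transcendental functions of $\epsilon_i$ and the single normalization $C=1/(1-\sum_j\sigma_d(\Sigma_{j,\epsilon_j}^c))$ couples all $m$ equations $B_i=0$. I expect the well–posedness of this implicit system to be the main obstacle. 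I would resolve it for small charges by observing that $B_i$ is negative for small $\epsilon_i$ and increases through a unique zero; because $B_i=0$ reads $A\,W_{d-2}(\mathbb{S}^d)\,\epsilon_i^{\,d}=4q_i$, all radii satisfy $\epsilon_i^{\,d}=4q_i\,\sigma_d(\Sigma_{\boldsymbol{\epsilon}})/W_{d-2}(\mathbb{S}^d)$ with a \emph{common} factor, so the system collapses to a single monotone scalar equation for $\sigma_d(\Sigma_{\boldsymbol{\epsilon}})=1-\sum_j\sigma_d(\Sigma_{j,\epsilon_j}^c)$, which has a unique solution. Since the resulting root scales like $\epsilon_i=O(q_i^{1/d})$, the complementary caps about the fixed, distinct $\PT{a}_i$ are automatically disjoint for small charges, validating the non–overlap hypothesis and completing the argument.
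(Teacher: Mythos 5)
Your proposal is correct and follows essentially the same route as the paper's proof: the same signed measure $C\,\bal_{d-2}(\sigma_d,\Sigma_{\boldsymbol{\gamma}})-\sum_i q_i\,\bal_{d-2}(\delta_{\PT{a}_i},\Sigma_{\boldsymbol{\gamma}})$, the same collapse of the swept mass onto the boundary circles via balayage in steps and non-overlap, the same localization to a single cap combined with the argument of Lemma~\ref{Lem} for the exterior inequality, the same implicit system \eqref{eq:gamma.i.cond.}--\eqref{EquationC} resolved by a monotone scalar equation, and the same appeal to Proposition~\ref{prop:1}(d) and \cite[Corollary 13]{BraDraSaf2014}. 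The only deviations are presentational: you re-derive the boundary concentration of $\bal_{d-2}(\delta_{\PT{a}_i},\Sigma_{i,\gamma_i})$ and $\bal_{d-2}({\sigma_d}_{\vert_{\Sigma_{i,\gamma_i}^c}},\Sigma_{i,\gamma_i})$ by comparing coefficients in $q_i$ between \eqref{eta_{t_i}} and \eqref{etabar}, where the paper cites \cite[Lemmas 33 and 36]{BraDraSaf2009} directly, and you solve the scalar equation in the variable $\sigma_d(\Sigma_{\boldsymbol{\epsilon}})=1/C$ rather than in $C$.
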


\begin{proof} Let $\boldsymbol{\gamma} = ( \gamma_1, \dots, \gamma_m )$ be a vector of $m$ positive numbers such that ${\Sigma_{i,\gamma_i}^c \cap \Sigma_{i,\gamma_j}^c = \emptyset}$, $1\leq i <j \leq m$. 
We consider the signed measure 
\begin{equation*}
\tau := C \, \bal_{d-2}( \sigma_d, \Sigma_{\boldsymbol{\gamma}} ) - \bal_{d-2}( q_1 \, \delta_{\PT{a}_1} + \cdots + q_m \, \delta_{\PT{a}_m}, \Sigma_{\boldsymbol{\gamma}} ).  
\end{equation*}
As balayage under Riesz $(d-2)$-kernel interactions satisfies \eqref{sBal}, we have
\begin{align*}
U_{d-2}^{\tau}( \PT{z} ) + Q( \PT{z} ) &= C \, U_{d-2}^{\sigma_d}( \PT{z} ) = C \, W_{d-2}( \mathbb{S}^d ), & \PT{z} &\in \Sigma_{\boldsymbol{\gamma}}, \\
U_{d-2}^{\tau}( \PT{z} ) + Q( \PT{z} ) &\geq C \, U_{d-2}^{\bal_{d-2}( \sigma_d, \Sigma_{\boldsymbol{\gamma}} )}( \PT{z} ), & \PT{z} &\in \mathbb{S}^d \setminus \Sigma_{\boldsymbol{\gamma}}.
\end{align*}
If the normalization constant $C = C( \boldsymbol{ \gamma } )$ is chosen such that $\| \tau \| = \tau( \Sigma_{\boldsymbol{\gamma}} ) = 1$, then $\tau$ is a signed $(d-2)$-equilibrium measure on $\Sigma_{\boldsymbol{\gamma}}$ associated with~$Q$ and, by uniqueness, $\eta_{\Sigma_{\boldsymbol{\gamma}},Q} = \tau$ and $F_{\Sigma_{\boldsymbol{\gamma}},Q} = \mathcal{F}_s( \Sigma_{\boldsymbol{\gamma}} )$ with $C = \mathcal{F}_s( \Sigma_{\boldsymbol{\gamma}} ) / W_{d-2}( \mathbb{S}^d )$.

We show the variational inequality for $\PT{z} \in \mathbb{S}^d \setminus \Sigma_{\boldsymbol{\gamma}}$ and proceed in a similar fashion as in the proof of Theorem~\ref{thm:main}. 
For $i = 1, \ldots, m$ let
%
\begin{equation} \label{BalSeqS} 
\nu_i := {\sigma_d}_{\vert_{\Sigma_{i, \gamma_i}^c}}, \qquad \widehat{\nu_i} := \bal_{d-2}(\nu_i,\Sigma_{i,\gamma_i}), \qquad  \widehat{\delta_{\PT{a}_i}} := \bal_{d-2}( \delta_{\PT{a}_i}, \Sigma_{i,\gamma_i}),
\end{equation}
where $t_i$ is the projection of the boundary circle $\partial \Sigma_{i,\gamma_i}$ onto the $\PT{a}_i $-axis; recall that $2( 1 - t_i ) = \gamma_i^2$.
As the open spherical caps $\Sigma_{i,\gamma_i}^c$, $1 \leq i \leq m$, do not intersect for $i \neq j$, we have $\partial \Sigma_{\gamma_i} \subset \partial \Sigma_{\boldsymbol{\gamma}}$. 
Balayage in steps yields
\begin{align*}
\bal_{d-2}( \nu_{i}, \Sigma_{\boldsymbol{\gamma}} )
&= \bal_{d-2}(\nu_{i},\Sigma_{i,\gamma_i}) = \widehat{\nu_{i}} = W_{d-2}(\mathbb{S}^d ) \, \frac{1-t_i}{2} \left( 1 - t_i^2 \right)^{d/2-1} \beta_{i}, \\
\bal_{d-2}( \delta_{\PT{a}_i}, \Sigma_{\boldsymbol{\gamma}} ) 
&= \bal_{d-2}( \delta_{\PT{a}_i}, \Sigma_{i, \gamma_i} ) = \widehat{\delta_{\PT{a}_i}} = \frac{4}{\gamma_i^d} \, \frac{1-t_i}{2} \left( 1 - t_i^2 \right)^{d/2-1} \beta_{i},
\end{align*}
where the respective last step follow from \cite[Lemmas 33 and 36]{BraDraSaf2009} and it is crucial that $\widehat{\nu_{i}}$ and $\widehat{\delta_{\PT{a}_i}}$ are supported on $\partial \Sigma_{i,\gamma_i}$ and thus $\partial \Sigma_{\boldsymbol{\gamma}}$, so that
\begin{align}
\tau
&= C \, {\sigma_d}_{\vert_{\Sigma_{\boldsymbol{\gamma}}}} + C \, \sum_{i=1}^m \widehat{\nu}_{i} - \sum_{i=1}^m q_i \,  \widehat{\delta_{\PT{a}_i}} \label{SignedEqS} \\
&= C \, {\sigma_d}_{\vert_{\Sigma_{\boldsymbol{\gamma}}}} + \sum_{i=1}^m \left( C \, W_{d-2}(\mathbb{S}^d ) - \frac{4 q_i}{\gamma_i^d} \right) \frac{1-t_i}{2} \left( 1 - t_i^2 \right)^{d/2-1} \beta_{i}.
\end{align}
Observe, the signed measure $\tau$ has a negative component if and only if 
\begin{equation*}
C \, W_{d-2}(\mathbb{S}^d ) - \frac{4 q_i}{\gamma_i^d} < 0 \qquad \text{for at least one $i \in \{ 1, \dots, m \}$.}
\end{equation*}
Let $\PT{z} \in \mathbb{S}^d \setminus \Sigma_{\boldsymbol{\gamma}}$. Then $\PT{z} \in \Sigma_{i_0,\gamma_{i_0}}^c$ for some $i_0 \in \{1, \dots, m \}$ and $\PT{z} \in \Sigma_{i,\gamma_{i}}$ for all $i \neq i_0$. Hence, 
\begin{equation*}
U_{d-2}^{\tau}( \PT{z} ) + Q( \PT{z} ) = C \, W_{d-2}( \mathbb{S}^d ) +  C \left( U_{d-2}^{\widehat{\nu_{i_0}}}( \PT{z} ) - U_{d-2}^{\nu_{i_0}}( \PT{z} ) \right) - q_{i_0} \left( U_{d-2}^{\widehat{\delta_{\PT{a}_{i_0}}}}( \PT{z} ) - U_{d-2}^{\delta_{\PT{a}_{i_0}}}( \PT{z} ) \right).
\end{equation*}
Using \eqref{eq:balayage.partition}, from \cite[Lemmas 33]{BraDraSaf2009}
\begin{equation*}
U_{d-2}^{\widehat{\nu_{i_0}}}( \PT{z} ) - U_{d-2}^{\nu_{i_0}}( \PT{z} ) = W_{d-2}( \mathbb{S}^d ) \left( \frac{1+t_{i_0}}{1+\xi_{i_0}} \right)^{d/2-1} - W_{d-2}( \mathbb{S}^d ) < 0
\end{equation*}
and from \cite[Lemmas 36]{BraDraSaf2009},
\begin{equation*}
U_{d-2}^{\widehat{\delta_{\PT{a}_{i_0}}}}( \PT{z} ) - U_{d-2}^{\delta_{\PT{a}_{i_0}}}( \PT{z} ) = \frac{1}{\gamma_{i_0}^{d-2}} \left( \frac{1+t_{i_0}}{1+\xi_{i_0}} \right)^{d/2-1} - \frac{1}{\left( 2 ( 1 - \xi_{i_0} ) \right)^{d/2-1}} < 0;
\end{equation*}
hence
\begin{equation*}
\begin{split}
U_{d-2}^{\tau}( \PT{z} ) + Q( \PT{z} ) 
&= C \, W_{d-2}( \mathbb{S}^d ) \left( \frac{1+t_{i_0}}{1+\xi_{i_0}} \right)^{d/2-1} - \frac{q_{i_0}}{\gamma_{i_0}^{d-2}} \left( \frac{1+t_{i_0}}{1+\xi_{i_0}} \right)^{d/2-1} \\
&\phantom{=}+ \frac{q_{i_0}}{\left( 2 ( 1 - \xi_{i_0} ) \right)^{d/2-1}}.
\end{split}
\end{equation*}
Observe the similarity to \eqref{eq:weighted.outside.special}. Essentially the same argument as in the proof of Lemma~\ref{Lem} shows that
\begin{equation*}
U_{d-2}^{\tau}( \PT{z} ) + Q( \PT{z} ) > C \, W_{d-2}( \mathbb{S}^d ), \qquad \PT{z} \in \Sigma_{i_0,\gamma_{i_0}}
\end{equation*}
in the case when
\begin{equation} \label{eq:pos.cond.}
C \, W_{d-2}(\mathbb{S}^d ) - \frac{4 q_{i_0}}{\gamma_{i_0}^d} \leq 0, \qquad i = 1, \dots, m.
\end{equation}
It is not difficult to see that near $\partial\Sigma_{i_0,\gamma_{i_0}}$ the following asymptotics holds:
\begin{equation*}
\begin{split}
U_{d-2}^{\tau}( \PT{z} ) + Q( \PT{z} ) 
&= C \, W_{d-2}( \mathbb{S}^d ) + \left( \frac{d}{2} - 1 \right) \left( \frac{4 q_{i_0}}{\gamma_{i_0}^d} - C \, W_0( \mathbb{S}^d ) \right) \frac{\xi_{i_0} - t_{i_0}}{1 + t_{i_0}} \\
&\phantom{= C \, W_{d-2}( \mathbb{S}^d ) }+ \frac{1}{2} \left( \frac{d}{2} - 1 \right) \frac{d}{2} \left( \frac{4 q_{i_0}}{\gamma_{i_0}^d} \frac{2 t_{i_0}}{1+t_{i_0}} + C \, W_0( \mathbb{S}^d ) \right) \left( \frac{\xi_{i_0} - t_{i_0}}{1 + t_{i_0}} \right)^2 \\
&\phantom{= C \, W_{d-2}( \mathbb{S}^d ) }+ \mathcal{O}\Big( \left( \frac{\xi_{i_0} - t_{i_0}}{1 + t_{i_0}} \right)^3 \Big) \qquad \text{as $\xi_{i_0} \to t_{i_0}^+$;}
\end{split}
\end{equation*}
i.e., the weighted $(d-2)$-potential of $\tau$ will be negative sufficiently close to~$\partial\Sigma_{i_0,\gamma_{i_0}}$ if \eqref{eq:pos.cond.} does not hold. Hence, if the necessary conditions \eqref{eq:pos.cond.} 
are satisfied, then 
\begin{equation*}
U_{d-2}^{\tau}( \PT{z} ) + Q( \PT{z} ) > C \, W_{d-2}( \mathbb{S}^d ), \qquad \PT{z} \in \Sigma_{\boldsymbol{\gamma}}^c.
\end{equation*}

Suppose, the system 
\begin{align}
C \, W_{d-2}(\mathbb{S}^d ) &= \frac{4 q_{i}}{\gamma_{i}^d}, \qquad i = 1, \dots, m, \label{eq:gamma.i.cond.} \\
C \, \sigma_d( \Sigma_{\boldsymbol{\gamma}} ) &= 1, \label{EquationC}
\end{align}
subject to the geometric side conditions
\begin{equation}\label{GeomCond}
\Sigma_{i, \gamma_i } \bigcap \Sigma_{j, \gamma_j } = \emptyset, \qquad 1 \leq i < j \leq m,
\end{equation}
has a solution $( \boldsymbol{\gamma}, C )$ with $\boldsymbol{\gamma} = \boldsymbol{\gamma}( C ) \in (0,2)^m$ and $C > 0$, then $\eta_{\Sigma_{\boldsymbol{\gamma}},Q} = \tau = C \, {\sigma_d}_{\vert_{\Sigma_{\boldsymbol{\gamma}}}}$ with $F_{\Sigma_{\boldsymbol{\gamma}},Q} = C \, W_{d-2}(\mathbb{S}^d )$ satisfies the variational inequalities
\begin{equation}
\begin{split}\label{ExtSupport}
U_{d-2}^{\eta_{\Sigma_{\boldsymbol{\gamma}},Q}}( \PT{z} ) + Q( \PT{z} ) 
&= F_{\Sigma_{\boldsymbol{\gamma}},Q}, \qquad \PT{z} \in \Sigma_{\boldsymbol{\gamma}}, \\
U_{d-2}^{\eta_{\Sigma_{\boldsymbol{\gamma}},Q}}( \PT{z} ) + Q( \PT{z} ) 
&> F_{\Sigma_{\boldsymbol{\gamma}},Q}, \qquad \PT{z} \in \Sigma_{\boldsymbol{\gamma}}^c,
\end{split}
\end{equation}
and thus, by Proposition~\ref{prop:1}(d), $\mu_Q = \eta_{\Sigma_{\boldsymbol{\gamma}},Q} = C \, {\sigma_d}_{\vert_{\Sigma_{\boldsymbol{\gamma}}}}$ and $S_Q = \Sigma_{\boldsymbol{\gamma}}$.
Observe that, given a collection of pairwise different points $\PT{a}_1, \dots, \PT{a}_m \in \mathbb{S}^d$, for sufficiently small charges $q_1, \dots, q_m$, there always exists such a solution. In particular, this is the case if \eqref{EquationC} holds for $\gamma_i=[4q_i/W_{d-2}(\mathbb{S}^d)]^{1/d}$.
%
%
%
%
%
%
%

To determine the parameter $C$, denote $g(C):=C\sigma_d(\Sigma_{\boldsymbol{\gamma}} ) $, where 
\[\gamma_i :=\gamma_i(C)=\left[ \frac{4 q_i}{CW_{d-2}(\mathbb{S}^d)} \right]^{1/d}, \qquad i=1,\dots,m.\] 
As $\gamma_i = \gamma_i(C)$ are decreasing and continuous functions for all $i=1,\dots,m$, we derive that $\sigma_d(\Sigma_{\boldsymbol{\gamma}})$ is an increasing and continuous function of $C$ and so is $g(C)$. Also, note that $g(1) =\sigma_d (\Sigma_{\boldsymbol{\gamma}})<1$, and $\lim_{C\to \infty} g(C)=\infty$. Therefore, there exists a unique solution $C^*$ of the equation
\begin{equation*}
C\sigma_d \left( \bigcap_{i=1}^m \Sigma_{i, \gamma_i} \right) = 1,
\end{equation*}
where the $\gamma_i$'s are defined by \eqref{eq:gamma.i.cond.}. 
%

Finally, we invoke \cite[Corollary 13]{BraDraSaf2014} and \eqref{ExtSupport} to conclude that an optimal $(d-2)$-energy $N$-point configuration w.r.t. $Q$ is contained in $S_Q$.
\end{proof}

\section{Regions of Electrostatic Influence and Optimal $(d-2)$-Energy Points}
\label{Int}

In this section we consider what happens when the regions of electrostatic influence (see Remark~\ref{rmk:main} after Theorem~\ref{thm:main}) have intersecting interiors. We are going to utilize the techniques in the proofs of \cite[Theorem 14 and Corollary 15]{BraDraSaf2014} to show that the support of the $(d-2)$-equilibrium measure associated with the external field \eqref{DiscrField} satisfies $S_Q \subset \Sigma_{\boldsymbol{\epsilon}}$, and hence the optimal $(d-2)$-energy points stay away from $\Sigma_{\boldsymbol{\epsilon}}^c$. 
We are going to a prove our result for $s$ in the range $d-2\leq s<d$. 
%

Let $\PT{a}_1,\dots, \PT{a}_m \in \mathbb{S}^d$ be $m$ fixed points with associated positive charges $q_1,\dots, q_m$. We define for $d - 2 \leq s < d$ the external field
\begin{equation} \label{DiscrFieldS}
Q_s( \PT{x} ) :=\sum_{i=1}^m q_i \, k_{s}(\PT{a}_i , \PT{x}), \qquad \PT{x} \in \mathbb{S}^d.
\end{equation} 
We introduce the reduced charges
\begin{equation*}
\overline{q}_i := \frac{q_i}{1+q-q_i}, \qquad 1 \leq i \leq m.
\end{equation*}
Let $\overline{\Phi}_s(t_i)$ be the Mhaskar-Saff $\mathcal{F}_s$-functional associated with the external field $\overline{q}_i\, k_{s}(\PT{a}_i , \boldsymbol{\cdot})$ evaluated for the spherical cap $\Sigma_{i,\gamma_i}$ (cf. Section \ref{Rie}) where it is used that $t_i$ and $\gamma_i$ are related by $2(1-t_i)=\gamma_i^2$. Let $\overline{\gamma}_i$ denote the unique solution of the equation 
\begin{equation} \label{PhiBar} 
\overline{\Phi}_s(t_i) = \frac{2^{d-s} \overline{q}_i}{\gamma_i^d},\qquad 1 \leq i \leq m.
\end{equation}

\begin{theorem} \label{RegionInfluence} 
Let $d - 2 \leq s < d$, $d\geq 2$, and let $\overline{\boldsymbol{\gamma}} = (\overline{\gamma}_1, \dots, \overline{\gamma}_m )$ be the vector of solutions of \eqref{PhiBar}. Then the support $S_{Q_s}$ of the $s$-extremal measure $\mu_{Q_s}$ associated with the external field~$Q_s$ defined in \eqref{DiscrFieldS} is contained in the set
$\Sigma_{\overline{\boldsymbol{\gamma}}}=\bigcap_{i=1}^m \Sigma_{i,\overline{\gamma}_i}$. 
%
\linebreak
If $d=2$ and $s=0$, then $\overline{\gamma}_i=\epsilon_i$, $1 \leq i \leq m$, where $\epsilon_i$ is defined in \eqref{epsilon}.
%

Furthermore, no point of an optimal $N$-point configuration w.r.t. $Q_s$ lies in $\Sigma_{i,\overline{\gamma}_i}$, $1 \leq i \leq m$.
\end{theorem}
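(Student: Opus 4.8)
My plan is to reduce the $m$-source problem to a single \emph{reduced} source at each $\PT{a}_i$, and then to localize the support by a domination (maximum-principle) comparison. Fix $i$ and abbreviate $\mu := \mu_{Q_s}$, $F := F_{Q_s}$, $S := S_{Q_s}$. First I would add the remaining sources back to $\mu$, setting
\[ \rho_i := \mu + \sum_{j \neq i} q_j\, \delta_{\PT{a}_j}, \qquad \widetilde{\rho}_i := \tfrac{1}{1+q-q_i}\, \rho_i, \]
so that $\widetilde{\rho}_i$ has unit mass (since $\|\rho_i\| = 1+q-q_i$). Because $Q_s = \sum_{j} q_j\, k_s(\PT{a}_j, \cdot)$ and $\overline{q}_i = q_i/(1+q-q_i)$, the weighted potential of $\widetilde{\rho}_i$ for the single reduced field $\overline{Q}_i := \overline{q}_i\, k_s(\PT{a}_i, \cdot)$ is
\[ U_s^{\widetilde{\rho}_i}(\PT{x}) + \overline{Q}_i(\PT{x}) = \tfrac{1}{1+q-q_i}\big( U_s^{\mu}(\PT{x}) + Q_s(\PT{x}) \big), \]
which by Proposition~\ref{prop:1}(c) equals the constant $\overline{F} := F/(1+q-q_i)$ on $S$ and is $\geq \overline{F}$ q.e.\ on $\mathbb{S}^d$. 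The entire theorem then reduces to the single strict inequality $U_s^{\mu} + Q_s > F$ on each open cap $\Sigma_{i,\overline{\gamma}_i}^c$. Indeed, this places both $S$ (by \eqref{VarEq2}) and every optimal configuration (by \cite[Corollary~13]{BraDraSaf2014}) inside $\{U_s^{\mu}+Q_s \leq F\}$, which is disjoint from each open cap; hence $S_{Q_s} \subseteq \bigcap_i \Sigma_{i,\overline{\gamma}_i} = \Sigma_{\overline{\boldsymbol{\gamma}}}$ and no optimal point lies in any $\Sigma_{i,\overline{\gamma}_i}^c$.

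To establish this inequality I would compare $\widetilde{\rho}_i$ against the single-source reduced equilibrium $\overline{\mu}_i := \eta_{\Sigma_{i,\overline{\gamma}_i}, \overline{Q}_i}$. By the definition \eqref{PhiBar} of $\overline{\gamma}_i$, the threshold \eqref{eq:negativity.relation} holds with equality for the charge $\overline{q}_i$, so Lemma~\ref{Lem} together with Proposition~\ref{prop:1}(d) shows that $\overline{\mu}_i$ is the nonnegative, finite-energy $\overline{Q}_i$-extremal measure on $\Sigma_{i,\overline{\gamma}_i}$, with $U_s^{\overline{\mu}_i} + \overline{Q}_i = \overline{F}_i$ on $\Sigma_{i,\overline{\gamma}_i}$ and $> \overline{F}_i$ on $\Sigma_{i,\overline{\gamma}_i}^c$, where $\overline{F}_i := \overline{\Phi}_s(\overline{t}_i)$ and $2(1-\overline{t}_i)=\overline{\gamma}_i^2$. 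On $\operatorname{supp}(\overline{\mu}_i) = \Sigma_{i,\overline{\gamma}_i}$ the bound $U_s^{\widetilde{\rho}_i} + \overline{Q}_i \geq \overline{F}$ rearranges to $U_s^{\overline{\mu}_i} \leq U_s^{\widetilde{\rho}_i} + (\overline{F}_i - \overline{F})$. Since $\overline{\mu}_i$ has finite energy and $\widetilde{\rho}_i \geq 0$, the Riesz domination principle (available for $d-2 \leq s < d$, the regime underlying the balayage identities \eqref{sBal}) propagates this to all of $\mathbb{S}^d$; adding $\overline{Q}_i$ and invoking the strict exterior inequality for $\overline{\mu}_i$ gives, for $\PT{x} \in \Sigma_{i,\overline{\gamma}_i}^c$,
\[ U_s^{\widetilde{\rho}_i}(\PT{x}) + \overline{Q}_i(\PT{x}) \geq U_s^{\overline{\mu}_i}(\PT{x}) + \overline{Q}_i(\PT{x}) + (\overline{F} - \overline{F}_i) > \overline{F}, \]
which is exactly the required inequality after multiplying by $1+q-q_i$.

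For the special case I would simply compute $1+\overline{q}_i = (1+q)/(1+q-q_i)$, whence $\overline{q}_i/(1+\overline{q}_i) = q_i/(1+q)$; as the single-source critical radius for charge $\overline{q}_i$ is $2\sqrt{\overline{q}_i/(1+\overline{q}_i)}$ (the $m=1$ instance of Theorem~\ref{thm:main}), this is precisely $\epsilon_i$ of \eqref{epsilon}, so $\overline{\gamma}_i = \epsilon_i$ when $d=2$, $s=0$.

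The hard part is the comparison in the second paragraph. The two Frostman constants $\overline{F}$ and $\overline{F}_i$ need not agree, and $\widetilde{\rho}_i$ carries point masses of infinite local energy, so one cannot hope that $\widetilde{\rho}_i$ coincides with $\overline{\mu}_i$, nor run a symmetric mutual-energy argument. The resolution I propose is to apply the domination principle \emph{asymmetrically}, with the finite-energy $\overline{\mu}_i$ as the dominated measure and the atomic $\widetilde{\rho}_i$ merely as a positive dominating measure, absorbing the constant discrepancy into the additive shift $\overline{F}_i-\overline{F}$ and drawing the strictness from Lemma~\ref{Lem}. A secondary point to verify is that the domination principle indeed holds on $\mathbb{S}^d$ throughout $d-2 \leq s < d$, which is exactly the range in which \eqref{sBal} is valid.
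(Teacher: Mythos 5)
Your core idea --- pass to the reduced charge $\overline{q}_i$ and compare the full problem with the single-source problem on $\Sigma_{i,\overline{\gamma}_i}$ --- is also the engine of the paper's proof, but the two executions differ: the paper works at the level of \emph{measures}, writing down the signed equilibrium $\eta_{\Sigma_{i,\gamma},Q_s}$ in \eqref{signQBar}, dominating it (in the sense of measures) by $(1+q-q_i)$ times the reduced single-source signed equilibrium, concluding from \eqref{eq:negativity.relation} that it has a negative component near $\partial\Sigma_{i,\gamma}$ for every $\gamma<\overline{\gamma}_i$, and then invoking $S_{Q_s}\subset\supp\bigl(\eta_{\Sigma_{i,\gamma},Q_s}^{+}\bigr)$ from \cite[Theorem 9]{BraDraSaf2014}. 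That route never needs a maximum principle with additive constants. Your route, at the level of \emph{potentials}, does, and this is where there is a genuine gap: the step propagating $U_s^{\overline{\mu}_i}\le U_s^{\widetilde{\rho}_i}+(\overline{F}_i-\overline{F})$ from $\supp(\overline{\mu}_i)$ to all of $\mathbb{S}^d$ is an instance of the \emph{complete} maximum principle, which is valid only when the additive constant is nonnegative. You flag the asymmetry of the domination and the admissible range of $s$, but you never address the sign of $\overline{F}_i-\overline{F}$, and nothing in your argument determines it.

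The sign requirement is not cosmetic. With a negative constant the principle is false for Riesz kernels, even on the sphere and even for equal-mass measures: take $\mu_K$ the equilibrium measure of a small cap $K$ of radius $\gamma$ centered at $\PT{a}$ and $\nu=\delta_{\PT{a}}$; for $s>d/2$ one has, on $\supp(\mu_K)$, $U_s^{\mu_K}\le W_s(K)\le U_s^{\delta_{\PT{a}}}+c$ with $c=W_s(K)-\gamma^{-s}<0$ for small $\gamma$, yet at the antipode $U_s^{\mu_K}(-\PT{a})>2^{-s}=U_s^{\delta_{\PT{a}}}(-\PT{a})>U_s^{\delta_{\PT{a}}}(-\PT{a})+c$. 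Moreover, the fallback of discarding a negative constant and applying plain domination only yields $U_s^{\widetilde{\rho}_i}+\overline{Q}_i>\overline{F}_i$ on the cap complement, which is strictly weaker than the needed $>\overline{F}$ precisely when $\overline{F}_i<\overline{F}$. The gap is repairable, because $\overline{F}\le\overline{F}_i$ does in fact hold: by Remark~\ref{FuncSignedEqRel}, $F_{Q_s}=\min_K\mathcal{F}_s(K)\le\mathcal{F}_s(\Sigma_{i,\overline{\gamma}_i})$, and since $U_s^{\mu_K}(\PT{a}_j)\le W_s(K)$ for $K=\Sigma_{i,\overline{\gamma}_i}$ and every $j\ne i$ (Frostman plus the maximum principle valid for $d-2\le s<d$), one gets $\mathcal{F}_s(\Sigma_{i,\overline{\gamma}_i})\le(1+q-q_i)\,\overline{\mathcal{F}}_s(\Sigma_{i,\overline{\gamma}_i})=(1+q-q_i)\,\overline{F}_i$, using $(1+q-q_i)\,\overline{q}_i=q_i$. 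But this estimate (or a substitute) must be proved before your final display is legitimate; as written the proof is incomplete. A secondary loose end: Lemma~\ref{Lem} is stated only for $d\ge3$, so in the case $d=2$, $s=0$ the strict exterior inequality for $\overline{\mu}_i$ must instead be drawn from the monotonicity of $f$ in \eqref{f-function} used in the proof of Theorem~\ref{thm:main}, and the logarithmic domination principle you would then invoke requires the equal-mass hypothesis $\|\widetilde{\rho}_i\|=\|\overline{\mu}_i\|$, which you have but do not mention.
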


\begin{proof}
First, we consider the case $d-2<s<d$. 
Let $i$ be fixed. Since the external field \eqref{DiscrFieldS} has a singularity at $\PT{a}_i$, it is true that $S_{Q_s} \subset \Sigma_{i, \rho}$ for some $\rho >0$. Moreover, as noted after Definition~\ref{signed}, $S_{Q_s} \subset \supp( \eta_{\Sigma_{i, \gamma}, Q_s}^+)$ for all $\gamma$ such that $S_{Q_s} \subset \Sigma_{i,\gamma}$.
It is easy to see that the signed equilibrium on $\Sigma_{i,\gamma}$ associated with $Q_s$ is given by
\begin{equation}\label{signQBar} 
\eta_{\Sigma_{i,\gamma},Q_s} = \frac{1 + \sum_{j=1}^m q_j \, \| \widehat{\delta_{\PT{a}_j}} \|}{\| \widehat{\nu_{i}} \|} \, \widehat{\nu_{i}} - \sum_{j=1}^m q_j \,  \widehat{\delta_{\PT{a}_j}},
\end{equation}
where 
\begin{equation*}
\widehat{\nu_{i}} = \bal_s( \sigma_d, \Sigma_{i,\gamma}),\qquad \widehat{\delta_{\PT{a}_j}} = \bal_s( \delta_{\PT{a}_j}, \Sigma_{i, \gamma} ).
\end{equation*}
Observe, that if $\PT{a}_j \in \Sigma_{i,\gamma}$ then $\widehat{\delta_{\PT{a}_j}} = \delta_{\PT{a}_j}$. We will show that for all $\rho < \gamma < \overline{\gamma}_i$ the signed $s$-equilibrium measure in \eqref{signQBar} will be negative near the boundary $\partial \Sigma_{i,\gamma}$. \linebreak Indeed, with the convention that the inequality between two signed measures ${\nu_1 \leq \nu_2}$ means that ${\nu_2-\nu_1}$ is a non-negative measure, we have
\begin{align}
\eta_{\Sigma_{i,\gamma},Q_s} 
&\leq \frac{1 + \sum_{j=1}^m q_j \, \| \widehat{\delta_{\PT{a}_j}} \|}{\| \widehat{\nu_{i}} \|} \, \widehat{\nu_{i}} - q_i \, \widehat{\delta_{\PT{a}_i}} \notag \\ 
& \leq \left( 1 + q - q_i \right) \left( \frac{1+ \overline{q}_i \, \|\widehat{\delta_{\PT{a}_i}} \| }{\| \widehat{\nu_{i}} \|} \, \widehat{\nu_{i}} - \overline{q}_i \, \widehat{\delta_{\PT{a}_i}} \right) \notag \\
&= \left( 1 + q - q_i \right) \left[ \frac{\overline{\Phi}_s(t)}{W_s( \mathbb{S}^d)} \widehat{\nu_{i}} - \overline{q}_i \, \widehat{\delta_{\PT{a}_i}} \right], \label{SignedBar}
\end{align}
where $2( 1 - t ) = \gamma^2$.
The square-bracketed part is the signed equilibrium measure on~$\Sigma_{i,\gamma}$ associated with the external field $\overline{q}_i\, k_{s}(\PT{a}_i , \boldsymbol{\cdot})$ and has a negative component near the boundary $\partial \Sigma_{i,\gamma}$ if and only if $\overline{\Phi}_s(t) - \frac{2^{d-s} \overline{q}_i}{\gamma^d} < 0$ as noted after~\eqref{etabar}. 
%
%
%
This inequality holds whenever $\rho < \gamma < \overline{\gamma}_i$ and the inclusion relation $S_{Q_s} \subset \Sigma_{i, \gamma}$ for all $\rho < \gamma < \overline{\gamma}_i$ can now be easily deduced. As $i$ was arbitrarily fixed, we derive $S_{Q_s} \subset \Sigma_{\overline{\boldsymbol{\gamma}}}$. 
As an optimal $N$-point configuration w.r.t. $Q_s$ is confined to $S_{Q_s}$, no point of such a configuration lies in $\Sigma_{i,\overline{\gamma}_i}^c$, $1 \leq i \leq m$.


In order to obtain the result of the theorem for $d=2$ and $s=0$, we use that balayage under logarithmic interaction preserves mass. Hence
\begin{equation*}
\eta_{\Sigma_{i,\gamma},Q_0} = \overline{\Phi}_0( t ) \, \widehat{\nu_{i}} - \sum_{j=1}^m q_j \, \widehat{\delta_{\PT{a}_j}} \leq \overline{\Phi}_0( t ) \, \widehat{\nu_{i}} - \overline{q}_i \, \widehat{\delta_{\PT{a}_i}}, \qquad \overline{\Phi}_0( t ) := 1 + q 
\end{equation*}
and the characteristic equation $\overline{\Phi}_0( t ) = \frac{4 \overline{q}_i}{\overline{\gamma}^2}$ reduces to $\overline{\gamma}^2 = \frac{4 \overline{q}_i}{1 + \overline{q}_i}$.
As before, no point of an optimal $N$-point configuration w.r.t. $Q_0$ lies in $\Sigma_{i,\overline{\gamma}_i}^c$, $1 \leq i \leq m$. This completes the proof.
\end{proof}

%
%

\begin{figure}[ht]
\begin{center}
\includegraphics[scale=.27]{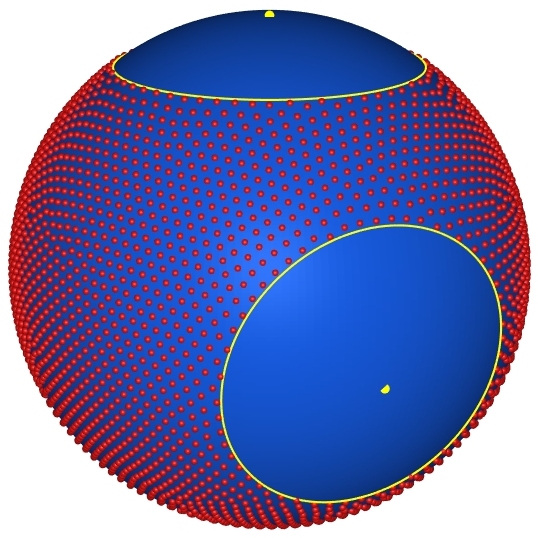} \qquad \includegraphics[scale=.27]{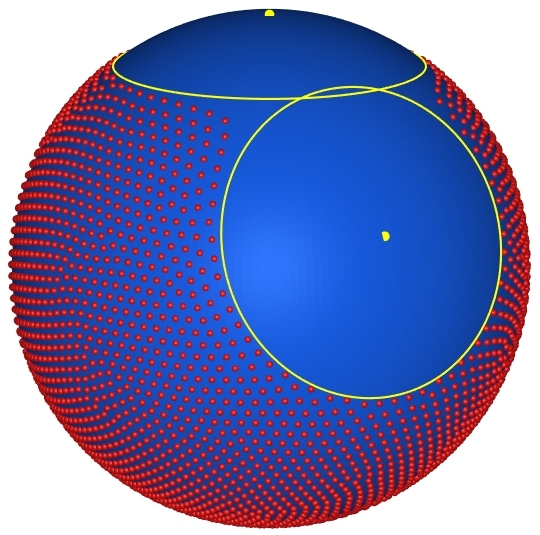} 
\end{center}
\caption{\label{fig3} Approximate Coulomb-optimal points for
$m = 2$, $N = 4000$, $q_1 = q_2 = \frac{1}{4}$, $\PT{a}_1 = ( 0, 0, 1 )$ and
$\PT{a}_2 = ( 0, \frac{\sqrt{91}}{10}, -\frac{3}{10} )$ or
$\PT{a}_2 = ( 0, \frac{\sqrt{91}}{10}, \frac{3}{10} )$}
\end{figure}

\begin{example} 
%
Observe, that if the charges $q_1, \dots, q_m$ are selected sufficiently small so that for all $i$ we have $\PT{a}_j \in \Sigma_{\gamma_i}$, then close to the boundary $\partial \Sigma_{\gamma_i}$ equality holds in \eqref{SignedBar}. So, the critical $\gamma_i$ can be determined by solving the equation
\[ \overline{\Phi}_s (t_i)-2^{d-s}\overline{q}_i /\gamma_i^d = 0,\]
where 
 \begin{equation}\label{PhiBarS} \overline{\Phi}_s (t_i) = W_s(\mathbb{S}^d )\frac{1+\overline{q}_i \|\widehat{\delta}_{t_i,s}\|}{\| \bal_s (\sigma_2,\Sigma_{\gamma_i})\|} .\end{equation}
Motivated by this, we consider the important case of Coulomb interaction potential, namely when $d=2$ and $s=1$. We find that (see \cite[Lemmas 29 and 30]{BraDraSaf2009})
\[ W_1 (\mathbb{S}^2 )=1,\quad \|\widehat{\delta}_{t_i,1}\| = \frac{\arcsin t_i}{\pi}+\frac{1}{2},\quad \| \bal_s (\sigma_2,\Sigma_{\gamma_i})\| = \frac{\sqrt{1-t_i^2} + \arcsin t_i}{\pi}+\frac{1}{2}.\]
Maximizing the Mhaskar-Saff $\mathcal{F}_1$-functional $\overline{\Phi}_1 (t)$ is equivalent to solving the equation
\[ \frac{\pi(1+\overline{q}_i/2)+\overline{q}_i \arcsin t_i}{\sqrt{1-t_i^2} + \arcsin t_i +\pi/2} =\frac{\overline{q}_i}{1-t_i}.\]
An equivalent equation in term of the geodesic radius $\alpha_i$ of the
cap $\Sigma_{\epsilon_i}^c$ of electrostatic influence,
so $t_i = \cos(\alpha_i)$ is
\[
  ({\bar q}_i+1)\pi \cos(\alpha) - {\bar q}_i \alpha \cos(\alpha) + {\bar q}_i \sin(\alpha)-\pi = 0.
\]
\end{example}

\begin{problem} The two images in Figure \ref{fig4} 
compare approximate log-optimal configurations with $4000$ and $8000$ points. 
The two yellow circles are the boundaries of $\Sigma_{1,\epsilon_1}$ and $\Sigma_{2,\epsilon_2}$. It is evident that optimal log-energy points stay away from the caps of electrostatic influence $\Sigma_{1,\epsilon_1}^c$ and $\Sigma_{2,\epsilon_2}^c$ of the two charges. In the limit, the log-optimal points approach the log-equilibrium support, which seems to be a smooth region excluding these caps of electrostatic influence. We conclude this section by posing as an open problem, the precise determination of the support in such a case. 
\end{problem}

\begin{figure}[ht]
\begin{center}
\includegraphics[scale=.28]{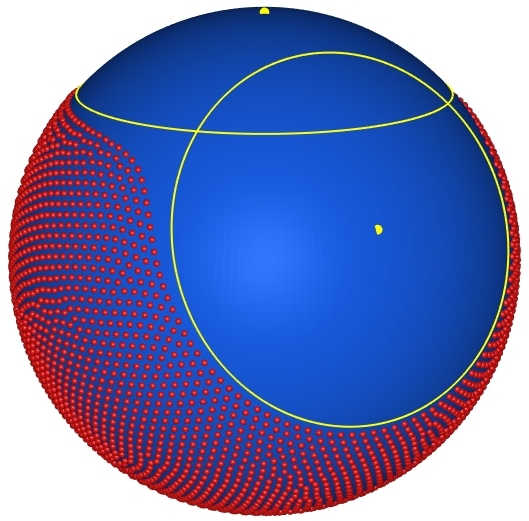} \quad \quad \includegraphics[scale=.28]{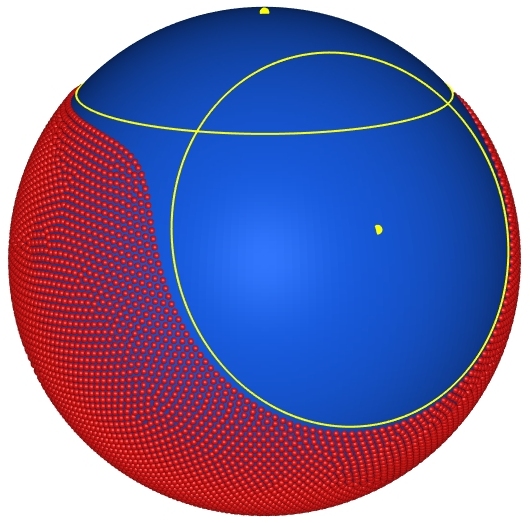} 
\end{center}
\caption{\label{fig4} Approximate log-optimal points for $m = 2$, $N = 4000$ (left) and $N = 8000$ (right), $q_1 = q_2 = \frac{1}{4}$, $\PT{a}_1 = ( 0, 0, 1 )$, $\PT{a}_2 = ( \frac{\sqrt{91}}{10}, 0, \frac{3}{10} )$}
\end{figure}

\begin{acknowledgement}
The research of Johann S. Brauchart was supported, in part, by the Austrian Science Fund FWF project F5510 (part of the Special Research Program (SFB) ``Quasi-Monte Carlo Methods: Theory and Applications'') and was also supported by the Meitner-Programm M2030 ``Self organization by local interaction'' funded by the Austrian Science Fund FWF. The research of Peter D. Dragnev was supported by the Simon's Foundation grant no. 282207. The research of Edward B. Saff was supported by U.S. National Science Foundation grant DMS-1516400. The research of Robert S. Womersley was supported by IPFW Scholar-in-Residence program. All the authors acknowledge the support of the Erwin Schr{\"o}dinger Institute in Vienna, where part of the work was carried out. This research includes computations using the Linux computational cluster Katana supported by the Faculty of Science, UNSW Sydney.
\end{acknowledgement}

%
\bibliographystyle{spmpsci}
\bibliography{REF}
%
%
%
%

\end{document}